\newcites{refsupp}{Supplementary references}
\newcommand*{\addFileDependency}[1]{
\typeout{(#1)}
%
%
%
\IfFileExists{#1}{}{\typeout{No file #1.}}
}\makeatother
\renewcommand\expandafter\subsection\expandafter{%
		\expandafter\@fb@secFB\subsection
	}%
\def\@thanks{}}%
\newtheorem{theorem}{Theorem} 
\newtheorem{lemma}{Lemma}  
\newtheorem{corollary}{Corollary}
\newtheorem{assumption}{Assumption}
\theoremstyle{definition}
\theoremstyle{remark}
\providecommand{\keywords}[1]{\textbf{Keywords:} #1}
\title{ {\huge {Selection of functional predictors and smooth coefficient estimation for scalar-on-function regression models}  }\\ \vspace{-1mm}}
\def\@author{
  \parbox{\textwidth}{\centering
    {\large {Hedayat Fathi}} \\ 
    \small{ Department of Operations and Decision Systems, Universit\'e Laval, Québec, Canada} \\ 
    \textcolor{blue}{hedayat.fathi.1@ulaval.ca} \\[2mm]
    
    {\large {Marzia A. Cremona}} \\ 
    \small{ Chair in Statistical Learning and Department of Operations and Decision Systems, Universit\'e Laval, Québec, Canada} \\ \textcolor{blue}{marzia.cremona@fsa.ulaval.ca} \\[2mm]
    
    {\large {Federico Severino}} \\ 
    \small{ Department of Finance, Insurance and Real Estate, Universit\'e Laval, Québec, Canada} \\ 
    \textcolor{blue}{federico.severino@fsa.ulaval.ca} \\[2mm]
    }
}
\date{}
\begin{document}

\maketitle


\begin{abstract}
In the framework of scalar-on-function regression models, in which several functional variables are employed to predict a scalar response, we propose a methodology for selecting relevant functional predictors while simultaneously providing accurate smooth (or more generally regular) estimates of the functional coefficients. 
We suppose that the functional predictors belong to a real separable Hilbert space, while the functional coefficients belong to a specific subspace of this Hilbert space. Such a subspace can be a Reproducing Kernel Hilbert Space (RKHS) to ensure the desired regularity characteristics, such as smoothness or periodicity, for the coefficient estimates. 
Our procedure, called SOFIA (Scalar-On-Function Integrated Adaptive Lasso), is based on an adaptive penalized least squares algorithm that leverages functional subgradients to efficiently solve the minimization problem. We demonstrate that the proposed method satisfies an appropriate version of the oracle property adapted to the functional setting, even when the number of predictors exceeds the sample size. 
SOFIA's effectiveness in variable selection and coefficient estimation is evaluated through extensive simulation studies and a real-data application to predict GDP growth. 
\end{abstract}

\keywords{Functional Data Analysis, Functional regression, Oracle property, Regularization, Reproducing Kernel Hilbert Space, Variable selection }

\emph{MSC classification:} primary 62R10, 62J05; secondary 46E22.




\section{Introduction} 
\label{sec:intro}
High-dimensional regression models are crucial in modern data analysis and have numerous applications in several fields such as genetics, finance, and the social sciences. In the extreme case, some or all of the variables included in the model are functions that vary over a continuum, such as time or space. The use of functional data can result in more accurate data analysis, improved pattern extraction, and enhanced visualization. However, functional data also introduce new challenges in statistical modeling. For example, in functional regression models involving a large number of predictors, selecting the most relevant ones is needed to improve model interpretability and predictive accuracy, as well as to reduce computational complexity \citep[see, e.g,~][]{campos2016,chen2020}. Standard references for functional data analysis (FDA) are \cite{ramsay2005} and \cite{kokoszka2017}.  


There exist three main types of functional linear regression models: scalar response with functional predictors, functional response with scalar predictors, and functional response with functional predictors. In this study, we focus on the first category, known as \emph{scalar-on-function regression}, and introduce a new approach to simultaneously perform variable selection and smooth coefficient estimation. In particular, we consider the standardized regression model
\begin{equation}\label{S-O-F}
Y_i = \sum_{j=1}^{p} \left\langle \beta^{\ast}_j, X_{ij} \right\rangle_{\mathbb{H}} + \varepsilon_i, \quad i = 1,\cdots, n,
\end{equation}
where $Y_i$ is the scalar response, $X_{ij}$ are the $p$ functional predictors belonging to a real separable Hilbert space $(\mathbb{H}, \langle \cdot,\cdot \rangle_{\mathbb{H}})$ (in the simplest case $L^2[0, 1]$), $\varepsilon_i$  are independent scalar random errors with mean 0 and variance $\sigma^2$, and $n$ is the sample size. For each $j$, the function $\beta^{\ast}_j$ quantifies the effect of the $j^{\text{th}}$ functional predictor on the scalar response. Applications of scalar-on-function regression are widespread, including finance (where curves are time series), medicine, and the natural sciences \citep{ullah2013}. 
Although variable selection methods for classic linear regression models are ubiquitous in the statistical literature \citep[see two recent examples in][]{insolia2022,hanke2024}, they have only recently been extended to the functional setting \citep[][]{aneiros2022}.

In this paper, we focus on situations where several functional predictors are present but only a small number among them are active (i.e.,~have $\beta^{\ast}_j \neq 0$). 
Even in classical multiple regression models (where no functional variable is involved), when the number of predictors is large, best subset selection may be computationally infeasible and stepwise selection provides only locally optimal solutions to the variable selection problem. Lasso can be effective in practice \citep{tibshirani1996}, but does not satisfy the oracle property \citep{fan2001}. 
%
%
The adaptive lasso method \citep{zou2006} adds a data-driven weighting scheme to the penalty term of lasso to improve the accuracy of the estimates and ensure the validity of an oracle property. Under appropriate assumptions, the adaptive lasso is an oracle estimator even when the number of predictors increases with the sample size \citep{huang2008}.
Following this literature, in this paper, we propose an adaptive lasso method for scalar-on-function regression that simultaneously performs variable selection and coefficient function estimation, ensuring a desired level of regularity for the coefficients. 
First, we employ a functional version of (non-adaptive) lasso to filter out some variables and obtain initial estimates of the coefficients for the selected predictors. Second, similarly to adaptive lasso, we set weights as reciprocals of the absolute values of the obtained estimates. 
The primary challenge in the scalar-on-function setting is that the design matrix does not consist of scalar entries but contains elements of a Hilbert space. This difference introduces several computational and theoretical challenges. 
To avoid redundancy in terminology, we refer to our method as \emph{SOFIA-Lasso} (Scalar-On-Function Integrated Adaptive Lasso), or simply \emph{SOFIA}. The term \emph{integrated} reflects the integration of two norms, enabling the simultaneous selection of variables and regularization of functional coefficients. 
From an asymptotical point of view, SOFIA enjoys an oracle property even in the case where the number of variables and the number of active predictors increase as the sample size increases. Specifically, we allow the number of predictors to grow faster than the sample size (even at an exponential rate). The challenges of such situations in multiple regression models are illustrated in \cite{huang2008}. 

In our framework, while predictors belong to a general Hilbert space \(\mathbb{H}\), we assume that coefficient functions \(\beta^{\ast}_j\) belong to a specific dense Hilbert subspace \(\mathbb{K}\) that incorporates desirable regularity properties such as smoothness or periodicity. This subspace is defined by the eigenfunctions of a positive self-adjoint operator $K$ on \(\mathbb{H}\). When a positive-definite kernel induces the operator $K$, the resulting space forms a reproducing kernel Hilbert space \citep[RKHS; see definitions in][]{hsing2015}. Hence, we consider a generalization of the RKHS setting which offers greater flexibility. 
This framework pursues two primary objectives. First, by choosing an appropriate operator $K$, we ensure the desired level of regularity for the coefficient functions $\beta^{\ast}_j$. 
Second, it results in a spectral decomposition that facilitates the approximation of the objective function within finite-dimensional subspaces.

The spaces \(\mathbb{H}\) and \(\mathbb{K}\) share the same orthonormal basis (given by the eigenfunctions of \(K\)), 
hence they are defined on the same algebraic structure, but the topology of \(\mathbb{K}\) is finer and induces a stronger norm than that of \(\mathbb{H}\). This distinction allows us to define the objective function of SOFIA and the following penalized least squares estimator of the coefficients in eq.~\eqref{S-O-F}:  
\begin{equation}\label{model:1}
\hat{\boldsymbol{\beta}}= \underset{\boldsymbol{\beta}\in \mathbb{K}^p}{\operatorname{argmin}} ~ \frac{1}{2n} ~\sum_{i=1}^{n}\left( Y_i - \sum_{j=1}^{p} \left\langle \beta_j, X_{ij}  \right\rangle_{\mathbb{H}} \right) ^2 + \lambda \sum_{j=1}^{p} \tilde{w}_j \left\| \beta_j \right\|_{\mathbb{K}},
\end{equation}  
where \(\lambda\) is a tuning parameter and \(\tilde{w}_j\) are adaptive weights designed to penalize smaller coefficient norms more heavily.

A crucial challenge in functional regression is the infinite-dimensional nature of the problem. This issue is particularly pronounced in scalar-on-function regression, where the empirical covariance matrix contains finite-rank operators on $\mathbb{H}$ and
cannot be invertible as an operator. 
To address this issue, a common approach in the literature is the so-called \textit{sieve method} \citep{wang2016}, which consists in projecting the predictors onto a finite-dimensional subspace of $\mathbb{H}$, denoted by $\mathbb{H}^{(m)}$. This subspace is defined by an orthonormal basis, which may be predefined (e.g., splines), data-driven (e.g., functional principal components), or derived from a general basis \citep{roche2023}. 
In the sieve method, the functional entities in the model are approximated over a sequence of nested finite-dimensional subspaces whose union is dense in the underlying Hilbert space. As the sample size increases, the sieve subspace expands, enabling the estimator to approximate the true functional parameters more accurately. This highlights the nonparametric nature of the problem, where consistent estimation is achieved by letting the complexity of the model (the sieve dimension $m$) grow with the sample size rather than fixing a finite number of parameters.
Following this method, we derive the required basis from the eigenfunctions of the operator \( K \), which allows us to project both functional predictors and coefficient functions onto the subspaces $\mathbb{H}^{(m)}$ and $\mathbb{K}^{(m)}$, respectively, and to obtain the following finite-dimensional approximation of eq$.$ \eqref{model:1} on $\mathbb{K}^{(m)}$:  
\begin{equation}\label{model:2}
\hat{\boldsymbol{\beta}}^{(m)} = \underset{\boldsymbol{\beta}\in \mathbb{K}^{(m)p}}{\operatorname{argmin}}~~ \frac{1}{2n} \sum_{i=1}^{n}~\left( Y_i - \sum_{j=1}^{p} \left\langle  \beta_j , X_{ij}^{(m)} \right\rangle_{\mathbb{H}} \right) ^2 + \lambda \sum_{j=1}^{p} \tilde{w}_j\left\| \beta_j \right\|_{\mathbb{K}}.
\end{equation}  

\subsection{Related works}

The problem of variable selection for scalar-on-function regression models with multiple functional predictors was first examined by \cite{cardot2007ozone}, which employed best subset selection in the context of ozone prediction. Although a few methods have been proposed since then, the literature on the subject remains relatively limited. Most of these existing methods are based on a group modeling strategy \citep{aneiros2022} which allows replacing functional variables with a group of scalar variables, and hence reformulating the model as a multiple regression model with grouped predictors.
Briefly, each functional variable is expanded using a pre-defined basis system (e.g.,~B-splines or Fourier basis) and the expansion is then truncated to obtain a finite-dimensional representation of the variable. A grouped variable selection method \citep[see, e.g.,][]{yuan2006} is then employed to select groups of predictors, which correspond to functional variables.
The first two works proposing this strategy are \cite{matsui2011}, which use Gaussian basis functions combined with a penalized least squares approach, and \cite{gertheiss2013}, which employ B-spline basis functions with an adaptive group lasso penalty for variable selection. In the latter, the penalty includes two components (controlled by two different parameters) to regulate the sparsity of the solution and enforce smoothness, respectively. 
While these methods are interesting from an algorithmic point of view, they do not provide theoretical guarantees on the selection of the true active variables. 

Among the methods based on a group modeling strategy that provide theoretical guarantees on variable selection performance, we find \cite{fan2015}, which use a generic basis and a group lasso penalty, and \cite{lian2013} and \cite{huang2016} which expand the functional data through functional principal components (FPCA) and use penalized least squares and penalized least absolute deviation, respectively.
In particular, \cite{fan2015} propose a functional additive regression model that includes linear regression. This study provides a rigorous asymptotic analysis, also when the number of predictors exceeds the sample size. However, the assumption that predictors are known to belong to a Sobolev ellipsoid is quite restrictive. 
\cite{huang2016} propose a similar method employing FPCA to expand functional predictors and a different penalty term. 
While this approach yields similar analytical results, it is more computationally complex, since it requires estimating a set of basis functions for each functional predictor.
Finally, \cite{lian2013} provides asymptotic results in the case of a fixed number of predictors and of active predictors. 
More recently, \cite{boschi2024} introduced FAStEN, an adaptive elastic net method for variable selection in function-on-function regression, which can be applied also to the scalar-on-function case. This work also expands the functional data using FPCA, and proves an asymptotic oracle property. 
FPCA-based approaches often require estimating the eigensystem of the covariance operator under strong assumptions on its eigenstructure; in contrast, SOFIA relies on a deterministic functional design requiring mild assumptions.

All the variable selection methods mentioned above only consider the case of functional data belonging to $L^2(T)$, where $T$ is a bounded interval in $\mathbb{R}$. To the best of our knowledge, \cite{roche2023} is the only work considering predictors in general Hilbert spaces and proving asymptotic and non-asymptotic properties.
In particular, it proposes two adaptations of the group lasso that satisfy sharp oracle inequalities for prediction errors. However, it does not prove results on the selection of the true active variables nor on the coefficient estimations, which we prove for SOFIA. 
 

RKHS, which play an essential role in our study, have received considerable attention over the past two decades in functional linear regression. A seminal contribution is represented by \cite{yuan2010}, who introduced a smoothness regularization method based on RKHS. Since then, RKHS-based approaches have been employed in various FDA studies for different purposes. For the foundational aspects of RKHS in functional regression models, we refer to \cite{kadri2010}; recent advancements can be found in \cite{wang2022}. In the context of RKHS-based linear regression, two approaches can be used for parameter estimation, namely the \emph{representer theorem} and \emph{functional subgradients}. The former is the dominant approach in the literature due to its theoretical convenience \citep[see, e.g.,][]{yuan2010,shin2016,berrendero2019}. However, the methods based on the representer theorem tend to be computationally slow \citep{parodi2018}. For this reason, SOFIA is based on functional subgradients \citep[Chapter 16]{bauschke2011}.

Finally, SOFIA takes inspiration from a series of papers by Reimherr and coauthors on variable selection based on lasso regularization. While these papers explore function-on-scalar regression as opposed to scalar-on-function regression, we borrow from them the definition of functional oracle property and of the Hilbert space \(\mathbb{K}\). \cite{barber2017} propose FS-Lasso, a method based on the (unweighted) group lasso. Although the oracle property is not satisfied, they give some reasonable bounds for estimation and prediction errors. \cite{fan2017} propose an adaptive version of the penalty term in \cite{barber2017} and prove the (functional) oracle property of the resulting method. Finally, FLAME \citep{parodi2018} satisfies the oracle property for variable selection, while ensuring desired properties such as smoothness or periodicity for the coefficients by requiring that they belong to an RKHS.

\subsection{Our contributions}
The general objective of this paper is to introduce a method for variable selection in scalar-on-function regression models that ensures the regularity of the estimated coefficients while satisfying the functional oracle property. 

Our first theoretical result concerns the estimator of eq.~\eqref{model:2} in the (non-adaptive) case of \(\tilde{w}_j=1\) and states that, under a suitable restricted eigenvalue (RE) condition, the estimation error admits a sharp bound. 
Although \cite{barber2017} established a similar result, we provide a novel proof for predictors that are finite-dimensional truncations of functional data, i.e$.$ vectors.
Our main theorem (i.e.,~Theorem~\ref{thm:main}) proves the functional oracle property of SOFIA in the spirit of \cite{fan2017}.
That is, SOFIA correctly identifies the true predictors with probability tending to one as the sample size $n$ increases, under some appropriate assumptions on the finite-dimensional projections.  If the tails of the true coefficients after truncation decay rapidly, then the estimation error declines with the nonparametric rate $o_P(\tau_m/n^{1/2})$, where $\tau_m$ is the minimum eigenvalue of the empirical covariance 
in $\mathbb{K}^{(m)}$. Importantly, 
this result also holds in the topology of $\mathbb{H}$. This responds to an open question of \cite{parodi2018}.

Another major contribution is the comparison of the variable selection and estimation performance of multiple recent methods in different simulation scenarios. In particular, we compare SOFIA with the standard group Lasso \citep{fan2015}, the semi-adaptive and adaptive group Lasso \citep{gertheiss2013}, the method of \cite{roche2023}, and FAStEN \citep{boschi2024}.  

Finally, we provide a real-world example for predicting Canadian GDP growth using several economic and financial variables. This application presents several challenges, such as handling unbalanced data and high-frequency variables, which we address by representing the predictors as functional objects.

\section{Mathematical framework}\label{sec:back}

In this section, we describe the general structure of the Hilbert space in which the functional predictors are defined and present the model assumptions, including regularity conditions on the coefficient functions and the structure of the trace-class operator that defines the Hilbert subspace where the coefficients reside. We then provide a structured set of assumptions that ensure the uniform convergence of the estimation procedure.

We assume that the predictors $X_{ij}$ are elements of real separable Hilbert space $(\mathbb{H},\|\cdot\|_{\mathbb{H}}) $, where the norm is derived from the inner product $\langle \cdot,\cdot \rangle_{\mathbb{H}}$. Note that $\mathbb{H}$ encompasses a wide range of spaces, including $L^2[0,1]$ (used in the simulations of Section~\ref{sec:sim}), $\mathbb{R}^N$, and more specialized function spaces commonly used in statistics, such as product spaces, Sobolev spaces, Wiener spaces, and RKHS.

To ensure regularization of the coefficient functions $\beta_j^*$, we follow the approach of \cite{parodi2018}. We consider a (compact) trace-class operator, $K :\mathbb{H} \longrightarrow \mathbb{H}$, which is positive definite and self-adjoint. 
The spectral theorem for compact self-adjoint operators implies that $K$ can be decomposed as $K = \sum_{l=1}^{\infty} \theta_l v_l \otimes v_l,$ where $ \theta_1 \geq \theta_2 \geq \dots > 0$ are the ordered eigenvalues and $v_l \in \mathbb{H}$ are the corresponding eigenfunctions, which constitute an orthonormal basis for $\mathbb{H}$ \citep[Theorem 2.4.4]{Murphy1990}. The notation $x \otimes y$ denotes the rank-one operator defined by $(x \otimes y)(h) := \langle y, h \rangle_{\mathbb{H}} x$ for any $x,y,h \in \mathbb{H}$.  
We define the subspace $\mathbb{K}$ of $\mathbb{H}$ as
\begin{equation}\label{spec}
\mathbb{K}   := \left\lbrace h \in \mathbb{H} :~ \sum_{l=1}^{+\infty} \frac{\left|\left\langle h, v_l \right\rangle_{\mathbb{H}}\right|^2}{\theta_l} =: \left \|h \right \|_{\mathbb{K}}^2  < {+\infty} \right\rbrace,
\end{equation}
which is a Hilbert space with norm $\|\cdot\|_{\mathbb{K}}$, $\mathbb{K}$. When \( \mathbb{H} = L^2[0,1] \), many examples can be found in the literature. In particular, if \( K \) is defined as an integral operator associated with a kernel function \( k(t,s) \) in a compact domain, 
then \( K \) is a trace-class, positive, and self-adjoint operator. As a result, any RKHS can be viewed as a special case of \( \mathbb{K} \). 
For a detailed discussion on compact and trace-class operators in Hilbert spaces, we refer to \cite{Murphy1990}, Section~2.4. For an overview of various types of kernel, see \cite{berlinet2011}, Chapter~7.
We assume that the coefficient functions $\beta_j^*$ belong to $\mathbb{K}$, ensuring the desired level of regularity of the coefficients by choosing an appropriate operator $K$. 
The objective function of SOFIA, shown in eq$.$ \eqref{model:1}, utilizes the $\mathbb{K}$-norm in the penalty term, while the least squares term involves the inner product in $\mathbb{H}$. We study the relationship between $\mathbb{K}$ and $\mathbb{H}$ in Subsection~S1.1 of the Supplementary material. 
Intuitively, we show that $\mathbb{K}$ is the range of the operator $K^{1/2}$, with an appropriate topology, and its norm can be characterized by $\|x\|_{\mathbb{H}} = \|K^{1/2}(x)\|_{\mathbb{K}}$. We also show that $\mathbb{K}$ is dense in $\mathbb{H}$ under the norm of $\mathbb{H}$.

\subsection{Model}
The following assumption establishes the foundational framework for our functional linear model by specifying the structure of the response, the predictors, and the coefficient functions.
\begin{assumption} \label{Assumption 1}
Let $Y_i$ denote the scalar response for subject $i$,  and $X_i = (X_{i1}, \ldots, X_{ip}) \in \mathbb{H}^p$ represent the deterministic functional predictors. 
Without loss of generality, we assume that $Y_i$ is centered and the predictors standardized, i.e.,~$\sum_{i=1}^{n} Y_i = 0$, $\sum_{i=1}^{n} X_{ij} = 0$, and $n^{-1} \sum_{i=1}^{n} \|X_{ij}\|^2_{\mathbb{H}} = 1$, $j = 1,\cdots, p$.
We assume that they satisfy the functional linear model of eq.~\eqref{S-O-F} with coefficient functions $\boldsymbol{\beta}^* = (\beta^*_1, \ldots, \beta^*_p) \in \mathbb{K}^p \subset \mathbb{H}^p$. The errors $\varepsilon_i$ are i.i.d.\ random variables with mean zero and variance $\sigma^2$. In addition, for constants $C_1, C_2 > 0$, we assume that $P\{|\varepsilon_i| > t\} \leq C_1 \exp(- C_2 t^2)$ for all $t \geq 0$ and $i=1,2,\ldots,n$. 
\newline Moreover, we assume that only the first $p_0 < p$ predictors are active, while the remaining $p-p_0$ predictors do not contribute to $Y_i$. Consequently, their corresponding coefficients are zero. 
\end{assumption}

We note that the sub-Gaussian tail condition on the errors is satisfied, for instance, by Gaussian random variables \citep[Section 2.3]{boucheron2003}.
Importantly, in our model the number of predictors $p$ and the number of active predictors $p_0$ are not fixed, but can increase with the sample size $n$. In particular, we require $p_0$ to be much smaller than both $n$ and $p$, while we allow $p$ to grow even exponentially with $n$ in our asymptotic results. 

\subsection{Finite-dimensional approximation}
In infinite-dimensional Hilbert spaces, the covariance operator is typically non-invertible due to its finite rank. To address this issue, we restrict the covariance operators to finite-dimensional subspaces of \(\mathbb{K}\). In particular, the \emph{sieve method} \citep{wang2016} enables us to approximate our infinite-dimensional estimator of eq$.$ \eqref{model:1} by solving a sequence of finite-dimensional problems. Following \cite{fan2015}, we consider a sequence of integers \( m = m_n = O(n)\), which depends only on the sample size \(n\). As $n$ grows to infinity, the dimension $m$ also tends to infinity and the sieve subspaces become dense in $\mathbb{K}$, allowing the estimator to converge to the true coefficient functions. Let \(\mathbb{K}^{(m)}\) denote the \(m\)-dimensional subspace of \(\mathbb{K}\) spanned by \(\{v_1, \cdots, v_{m}\}\), and \(\mathbb{K}^{(m)\perp}\) its orthogonal complement. The projections of \(\beta_{j} \in \mathbb{K}\) on \(\mathbb{K}^{(m)}\) and \(\mathbb{K}^{(m)\perp}\) are denoted by \(\beta_{j}^{(m)}\) and \(\beta_{j}^{(m)\perp}\), respectively. We use the analogous notation for the projections of \(X_{ij}\) onto \(\mathbb{H}^{(m)}\) and \(\mathbb{H}^{(m)\perp}\). 

In order to approximate the functional coefficients and predictors while controlling the approximation errors, we impose standard assumptions on their bounds.

\begin{assumption} \label{Assumption 2}
There exist constants $b_1, b_2 > 0$ and $\gamma > 1$ such that \( p_0 = o(m^{(\gamma -1)/2}) \), and the active coefficients and predictors satisfy:
\[ 
\left\|\beta_j^{\ast} \right\|_{\mathbb{K}} \leq b_1 \quad \text{and} \quad \sup_{1 \leq i \leq n} \theta_l \left\langle X_{ij}, v_l \right\rangle_{\mathbb{H}}^2 \leq b_2 l^{-\gamma} \quad \text{for all }  j = 1, \ldots, p_0 \text{ and } l \geq 1.
\]
\end{assumption}

The assumption on the true coefficients implies that such coefficients are uniformly bounded, a standard assumption in infinite-dimensional regression models \citep{fan2015}. Since we work with deterministic predictors, the condition on the active predictors is satisfied, for example, when 
$\|X^{[j]}\|_{\mathbb{H}}$, the column of the design matrix related to the $j^\text{th}$ predictor, is uniformly bounded 
and the operator $K$ exhibits at least a polynomial decay rate ($\theta_l \leq C l^{-\gamma}$ for some constants $C > 0$ and $\gamma > 1$). The operator $K$ meets this assumption, for example, when it is induced by the Gaussian kernel or the Matérn kernels. Indeed, the Gaussian kernel has exponential decay, while the Matérn kernel has a decay rate of the form  $\theta_l \leq C l^{-(1+2\nu)}$ \citep{widom1964}, hence the assumption is satisfied for any smoothing parameter $\nu > 0$. 
Since our assumption directly concerns the projection coefficients of active predictors on a given basis, we do not need additional assumptions on the eigenstructure of the covariance operators, which are required in FPCA-based methods \citep{lian2013, fan2015, boschi2024}.
A condition similar to the bounded design is also used in \cite{roche2023}, where a deterministic design matrix is considered.  The condition \( p_0 = o(m^{(\gamma -1)/2}) \) reflects the fact that we are working in a sparse regime, where the number of active functional predictors grows more slowly than $m^{(\gamma-1)/2}$, where $m$ is the sieve dimension and $\gamma$ the spectral decay. For kernels with fast spectral decay (e.g., Gaussian kernel or Matérn kernel with $\nu>1$), this requirement is really mild, whereas for kernels with a slower decay rate, such as exponential/Laplacian kernels, it corresponds to the sparsity condition \( p_0 = o(\sqrt{m})\).

The predictors and coefficients can be expanded on the basis \(\{v_l\}_{l=1}^{+\infty}\) as
\[
X_{ij} = \sum_{l=1}^{+\infty} \left\langle X_{ij}, v_l \right\rangle_{\mathbb{H}} v_l, \qquad 
\beta_j^{\ast} = \sum_{l=1}^{+\infty} \left\langle \beta_j^{\ast}, v_l \right\rangle_{\mathbb{H}} v_l.
\]
If we restrict the predictors and coefficients to the $m$-dimensional subspaces, we obtain the following decomposition for our regression model \eqref{S-O-F}:
\begin{equation}\label{approx}
Y_i = \sum_{j=1}^{p_0} \left\langle \beta_j^{*(m)} , 
  X_{ij}^{(m)} \right\rangle_{\mathbb{H}} + \sum_{j=1}^{p_0} e_{ij} + \varepsilon_i,
\end{equation}
where \( e_{ij} \) denotes the approximation error, defined as
\(
e_{ij} = \sum_{l=m+1}^{\infty} \langle X_{ij}, v_l \rangle_{\mathbb{H}} \langle \beta_j^{\ast}, v_l \rangle_{\mathbb{H}}.
\)
Proposition~S2 of Subsection~S1.2 of the Supplementary material 
shows that, under Assumption~\ref{Assumption 2}, these approximation errors converge uniformly to zero as $m \to +\infty$. This permits us to consider the finite-dimensional truncation of model~\eqref{S-O-F}:
\begin{equation}\label{fini}
Y_i = \sum_{j=1}^{p_0} \left\langle \beta_j^{*(m)} , X_{ij}^{(m)}  \right\rangle_{\mathbb{H}} + \varepsilon_i.
\end{equation}

To estimate the finite-dimensional coefficients, we consider the objective function of SOFIA:
\begin{equation}\label{lass-fini}
L_{\lambda}: \mathbb{K}^{(m)} \longrightarrow \mathbb{R}^+, \qquad  L_{\lambda}(\boldsymbol{\beta}) = \frac{1}{2n} \sum_{i=1}^{n} \left(Y_i - \sum_{j=1}^p \left\langle  \beta_j , X_{ij}^{(m)} \right\rangle_{\mathbb{H}} \right)^2 + \lambda \sum_{j=1}^p \tilde{w}_j \| \beta_j \|_{\mathbb{K}}.
\end{equation}
The weights \( \tilde{w}_j \) are chosen adaptively. First, we set all weights to 1 to perform an initial selection and obtain an estimate \( \tilde{\beta}_j \). Second, we perform an adaptive step with weights \( \tilde{w}_j = 1/ \|\tilde{\beta}_j\|_{\mathbb{K}} \). 

Next, we introduce some notation. We use bold symbols to denote vectors and matrices. Let $\boldsymbol{X} = (X_{ij}) \in \mathbb{H}^{n \times p}$ be the Hilbert-valued deterministic design matrix, which we partition as $\boldsymbol{X} = (\boldsymbol{X}_1, \boldsymbol{X}_2)$, where $\boldsymbol{X}_1 \in \mathbb{H}^{n \times p_0}$ and $\boldsymbol{X}_2 \in \mathbb{H}^{n \times (p - p_0)}$ represent the submatrices of active and null predictors, respectively. The support of $\boldsymbol{\beta}^{\ast}$ is defined as $\mathcal{A} = \{ 1, \ldots, p_0 \}$ and its complement is $\mathcal{A}^{c}$. When there is no ambiguity, we denote the inner product and norm of both $\mathbb{K}^p$ and $\mathbb{K}$ using the same notation.

Since the least squares estimator is based on the inner product of $\mathbb{K}$, we define the covariance operator between two predictors as an operator on $\mathbb{K}$. For \( j, k = 1, \dots, p_0 \), the empirical covariance operators are
\begin{equation}\label{cov}
    \hat{\Gamma}_{j k} : \mathbb{K} \longrightarrow \mathbb{K}, \qquad \hat{\Gamma}_{j k} \left( h \right) = \frac{1}{n} \sum_{i=1}^n \left\langle h, K\left(X_{ik}\right) \right\rangle_{\mathbb{K}} K\left(X_{ij}\right).
\end{equation}
These operators incorporate the projection of predictors in the range of \( K \). In addition, we define the Hilbert-valued cross-covariance for \( j = 1, \dots, p_0 \) as
\[
\hat{\Gamma}_{Y X_j} = \frac{1}{n} \sum_{i=1}^n Y_i K\left(X_{ij}\right) \in \mathbb{K}.
\]
We denote by \( \hat{\Gamma}_{\boldsymbol{X_1}} \) the \( p_0 \times p_0 \) matrix of operators \( \hat{\Gamma}_{j k} \) and define the vector of cross-covariance operators as \( \hat{\Gamma}_{Y\boldsymbol{X_1}} = (\hat{\Gamma}_{Y X_1}, \dots, \hat{\Gamma}_{Y X_{p_0}})^\top \). The cross-covariance operator-valued matrix between null and active predictors is denoted by \( \hat{\Gamma}_{\boldsymbol{X_2X_1}} \) and is defined analogously.
For $j,k = 1, \cdots, p$, the finite-dimensional covariance operators, indicated by \( ^{(m)} \), are
\[
\hat{\Gamma}_{j k}^{(m)} : \mathbb{K}^{(m)} \longrightarrow \mathbb{K}^{(m)}, \qquad 
\hat{\Gamma}_{j k}^{(m)} (h) = \frac{1}{n} \sum_{i=1}^n \left\langle h, K\left(X_{ik}^{(m)}\right) \right\rangle_{\mathbb{K}} K\left(X_{ij}^{(m)}\right).
\]

Finally, we focus on the oracle estimator. This estimator, denoted by \( \hat{\boldsymbol{\beta}}^O = (\hat{\beta}_1^{O}, \dots, \hat{\beta}_p^{O}) \), is typically defined as the least squares estimator for active predictors, while it is set to zero for inactive ones. However, in scalar-on-function regression, the empirical covariance matrix \( \hat{\Gamma}_{\boldsymbol{X_1}} \) is always singular, making the least squares estimator inaccessible. In our theory, we work on finite-dimensional spaces $\mathbb{K}^{(m)}$ and define the sieve-oracle estimator as
$\boldsymbol{\hat{\beta}}_1^{O(m)} = (\boldsymbol{\hat{\beta}}_1^{O(m)}, 0)$ where
\(
\boldsymbol{\hat{\beta}_1}^{O(m)} = \hat{\Gamma}_{\boldsymbol{X_1}}^{(m)-1} \hat{\Gamma}_{Y\boldsymbol{X_1}}^{(m)}.
\)
In Section \ref{sec:resul}, we demonstrate that our estimator satisfies the functional oracle property. Then, we show that under appropriate assumptions, it tends to the true beta at a rate determined by the minimum eigenvalue of the empirical covariance matrix. For more information on the oracle estimator and the functional oracle property, see, for example, \cite{fan2017}.

\subsection{Restricted eigenvalue (RE) condition}

In our approach, as a first step, we set all the weights in the objective function~\eqref{lass-fini} to 1. The obtained estimator does not achieve the oracle property, similar to the scalar case \citep{fan2001}. Nevertheless, in Theorem~\ref{thm:lasso}, we show that under an appropriate version of the RE condition, a tight bound can be found for the distance between the true and the estimated coefficients. We use this bound in the adaptive step to prove the oracle property. 

When the predictors are functional, the classic definition of the RE condition is never met: when the dimension of the underlying space is infinity, the lower bound in the RE condition is always zero \citep[Lemma 2.1]{roche2023}. To overcome this issue, \cite{roche2023} proposes to define the RE condition through a sequence \({\alpha}^{(m)}_{n}\) of nonnegative (possibly zero) values. We adopt a similar approach and consider the following condition.

\begin{assumption}(\textbf{RE condition})\label{Assumption RE}
For the dimension $m$, there exists a positive number $ \alpha_{m}$ such that
\[
\alpha_{m} := \min \left\{ \frac{ \left\|\Gamma_{\boldsymbol{XX}}^{(m) 1/2}\boldsymbol{\delta} \right\|_{\mathbb{K}}}{ \left\|\boldsymbol{\delta} \right\|_{\mathbb{K}}} : |J| \leq p_0,~ \boldsymbol{\delta} 
\in \mathbb{K}^{(m)p} \setminus \{0\}, ~\left\| \boldsymbol{\delta}_{J^c} \right\|_{\ell_1 / \mathbb{K}} \leq 3 \left\| \boldsymbol{\delta}_J \right\|_{\ell_1 / \mathbb{K}} \right\},
\]
where $\|\boldsymbol{\delta} \|_{ \mathbb{K}} = ( \sum_{j=1}^p \| \delta_j \|_{\mathbb{K}}^2)^{1/2}$ represents the $\ell_2$ norm, $\| \boldsymbol{\delta}_J\|_{\ell_1 / \mathbb{K}} =  \sum_{j\in J} \| \delta_j \|_{\mathbb{K}}$ and $J^c$ denotes the complement of the set $J$.

\end{assumption}

Note that our RE condition differs from that of \cite{roche2023} in two key aspects. First, in our definition, we construct the covariance operators using the approximated predictors instead of the predictors. Second, we assume that the dimension $m$ depends on the sample size $n$ and is smaller than it, while \cite{roche2023} treats the dimension \(m\) and the sample size \(n\) as independent, allowing \(m\) to grow to infinity while \(n\) remains fixed. As a result, in \cite{roche2023} for sufficiently large dimensions \(\alpha^{(m)}_{n}\) can become zero.
In contrast, in our setting \(m\) depends on \(n\), and \(m = O(n)\). In this way, the lower bounds $\alpha^{(m)}$ in Assumption \ref{Assumption RE} can form a sequence of strictly positive scalars, typically tending to zero.
For more details on the RE condition, see \cite{bickel2009} for lasso, \cite{lounici2011} for group lasso, \cite{bellec2017} for a general convex penalty, and \cite{barber2017} for function-on-scalar models.

\subsection{Asymptotic assumptions}

We use the following technical assumptions, which are well-documented in the literature, to assess the asymptotic behavior of our model. If the design matrix is scalar-valued, the upper and lower bounds are typically assumed to be constant \citep[see][]{huang2008,fan2017,parodi2018,mirshani2021}. In contrast, following \cite{roche2023}, we assume that these bounds are sequences that tend to zero as $m$ increases. We study the relation between the RE condition and Assumption~\ref{Assumption assy} in Proposition~S3 in Subsection~S1.3 of the Supplementary material. 
Specifically, we show that when the covariance between the null predictors tends to zero rapidly, the RE condition is met under Assumption~\ref{Assumption assy}.

\begin{assumption} \label{Assumption assy}
For the dimension $m$,  the following conditions hold.
\begin{enumerate}
\item Let $\sigma_{\min}$ and $\sigma_{\max}$ be the smallest and largest eigenvalues of $\hat{\Gamma}_{\boldsymbol{X_1}}^{(m)},$  respectively. There exist a fixed constant $\tau$ and a sequence $\tau_{m} > 0$ such that
\[
\frac{1}{\tau_{m}}\leq \sigma_{\min}\left(\hat{\Gamma}_{\boldsymbol{X_1}}^{(m)}\right) \leq \sigma_{\max}\left(\hat{\Gamma}_{\boldsymbol{X_1}}^{(m)}\right) \leq \tau.
\] 
 \item  Let $b = \min_{j \in \mathcal{A}} \|\beta_j^{\ast}\|_{\mathbb{K}}$,~ $b_{m} = \min_{j\in \mathcal{A}^{(m)}}  \|\beta_j^{\ast(m)}\|_{\mathbb{K}}$, and $\zeta_m = \min \{\tau_m^{-1} , \alpha_m^2\}$. Then,  
 \[
 \underset{j\in \mathcal{A}}{\max}\|\beta_j^{*} - \beta_j^{*(m)}\|_{\mathbb{K}} = o(b), \quad \text{and} \quad b_{m} \gg  p_0 \sqrt{\log(p)}/\zeta_m \sqrt{n}.
 \]
 \item The tuning parameter $\lambda$ satisfies the bounds
\( p_0^{1/2} \log(p)\zeta_m^{-1} n^{-1} \ll \lambda \ll b_{m} p_0^{-1/2} n^{-1/2}. \)
\item Let $\|\cdot\|_{\text{op}}$ denote the operator norm. Then,
 \(\|\hat{\Gamma}_{\boldsymbol{X_2 X_1}}^{(m)} \hat{\Gamma}_{\boldsymbol{X_1}}^{^{(m)}-1}\|_{\text{op}} = \varphi_{m} < 1.\)
\end{enumerate}

\end{assumption}

These assumptions are counterparts to some standard assumptions in the high-dimensional regression literature. The first one, often called \emph{design matrix condition} in the scalar design, ensures that the true predictors exhibit regular behavior and are not excessively correlated. This condition is imposed by setting fixed lower and upper bounds for the eigenvalues. In our approach, since \( \hat{\Gamma}_{\boldsymbol{X_1}} \) is a finite-rank operator on an infinite-dimensional Hilbert space \( \mathbb{H}^{p_0} \), it is never injective, and so \( 0 \) is always an eigenvalue. To address this issue, we restrict our analysis to finite-dimensional subspaces of the underlying infinite-dimensional space, making it natural to impose a sequence of bounds that depend on both the dimension and the sample size. Typically, \( (\tau_{m})^{-1} \) has zero as an accumulation point, since, as the sample size increases, \( \hat{\Gamma}_{\boldsymbol{X_1}}^{(m)} \) converges to a compact operator (not necessarily of finite rank), and compact operators usually have \( 0 \) as a limit point of their eigenvalues (unless they are of finite rank). This implies that the design matrix becomes increasingly ill-conditioned and nearly singular as $n$ grows. This ill-posedness is controlled by the assumptions on the minimum norm of the coefficients and the tuning parameter. In contrast, when the population covariance $\Gamma_{\boldsymbol{X_1}}$ is bounded, the upper eigenvalue bound can be taken as fixed, similar to scalar cases.
The second group of conditions corresponds to the \emph{minimum signal condition}. It plays two roles in our study. First, it ensures that sieve truncation does not erase the active coefficients. Second, it allows the smallest \(\beta_j^* \) to vary with the sample size, dimension, lower bound of the eigenvalues, and RE bound, while ensuring that it does not become too small. If \( (\tau_{m})^{-1} \) or \(\alpha_m^2\) decay too rapidly in relation to sample size, number of predictors, and number of active predictors, then there is no guaranty that the model is able to select predictors with small norms. 
The third condition prevents the tuning parameter from growing too rapidly or too slowly as $n$ grows. The lower bound prevents overfitting by controlling false positives, while the upper bound ensures that relevant predictors are not excessively penalized, avoiding false negatives. 
Finally, the fourth condition,  known as the \emph{irrepresentable condition}, ensures that the true and the null predictors do not exhibit a high correlation. In the literature on scalar predictors, the sequence $\varphi_m$ is supposed to be a fixed number, usually lower than 1. This requirement has been established as a necessary condition for having the oracle property: see, e.g$.$, \citep[see, e.g.,][for formal definitions and proofs]{zhao2006}. 

\section{Theoretical results}\label{sec:resul}
This section presents the theoretical results of the paper. We begin with a concentration inequality that bounds the tail probability of a weighted sum of Hilbert space-valued random variables and serves as a key component in the proofs of Theorems~\ref{thm:lasso} and~\ref{thm:main}. This result is an extension of Lemma 1 in the supplementary material of \cite{huang2008} to the Hilbert space setting. Its proof is given in Subsection~S1.4 of the Supplementary material.

\begin{lemma}[\textbf{Hilbert-valued concentration inequality}] \label{lem:hilbert_concentration}
 Let \( \boldsymbol{X} = (X_{ij}) \in \mathbb{H}^{n\times p} \) and \( \boldsymbol{\varepsilon} = (\varepsilon_1, \cdots, \varepsilon_n) \) be as in Assumption~\ref{Assumption 1}. For each \( j = 1, \ldots, p \), define 
\(
Z_j := n^{-1} \sum_{i=1}^n \varepsilon_i X_{ij}
\). Then, for any \( t > 0 \), 
\[
P \left( \left\| Z_j \right\|_{\mathbb{H}} > t \right) \leq \exp \left( - \frac{n t^2}{M} \right),
\]
where \( M > 0 \) is a constant depending only on the constants $C_1$ and $C_2$ in Assumption~\ref{Assumption 1}.
\end{lemma}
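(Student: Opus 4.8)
The plan is to reduce the Hilbert-valued concentration problem to a scalar one via a symmetrization/chaining argument, exploiting that $\mathbb{H}$ is separable so that $\|Z\|_{\mathbb{H}}$ can be represented through a countable orthonormal basis. First I would fix an orthonormal basis $(e_k)_{k\geq 1}$ of $\mathbb{H}$ and write $\|Z\|_{\mathbb{H}}^2 = \sum_k \langle Z, e_k\rangle_{\mathbb{H}}^2$, where each coordinate $\langle Z, e_k\rangle_{\mathbb{H}} = N^{-1}\sum_n \epsilon_n \langle X_n, e_k\rangle_{\mathbb{H}}$ is a scalar weighted sum of independent sub-Gaussian variables, hence itself sub-Gaussian with variance proxy controlled by $N^{-2}\sum_n \langle X_n, e_k\rangle_{\mathbb{H}}^2$. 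The key structural fact is that these per-coordinate variance proxies sum (over $k$) to $N^{-2}\sum_n \|X_n\|_{\mathbb{H}}^2 = N^{-1}$, by the normalization $N^{-1}\|\boldsymbol{X}\|_{\ell_2/\mathbb{H}}^2 = 1$. This is exactly the finite ``total variance'' budget that makes the Hilbert-valued sum behave like a scalar sub-Gaussian variable of variance proxy $O(1/N)$.

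The cleanest route is probably to avoid coordinates altogether and argue directly: condition on nothing, and bound the moment generating function of $\|Z\|_{\mathbb{H}}$ (or of $\|Z\|_{\mathbb{H}}^2$). One standard device is to use that for a sub-Gaussian vector, $\mathbb{E}\exp(\lambda \|Z\|_{\mathbb{H}}^2)$ is finite for small $\lambda$; one can get there by first establishing $\mathbb{E}\exp(\langle v, Z\rangle_{\mathbb{H}})\le \exp(c\|v\|_{\mathbb{H}}^2/N)$ uniformly over $v\in\mathbb{H}$ (this follows coordinate-free from the sub-Gaussian tail of $\epsilon_n$ in Assumption~\ref{Assumption 1} together with the normalization), and then applying a Gaussian-type integration / Chernoff bound. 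Alternatively, since the $\epsilon_n$ are i.i.d. mean-zero with the stated tail, I would invoke (or reprove, following the scalar argument in the supplement of \cite{huang2008}) a Bernstein-type inequality for sums of independent Hilbert-valued random elements $\epsilon_n X_n$: their norms have a sub-Gaussian/sub-exponential tail because $\|\epsilon_n X_n\|_{\mathbb{H}} = |\epsilon_n|\|X_n\|_{\mathbb{H}}$ and $\|X_n\|_{\mathbb{H}}$ is deterministic, and $\sum_n \|X_n\|_{\mathbb{H}}^2 = N$. Either way one arrives at a bound of the form $P(\|Z\|_{\mathbb{H}}>t)\le \exp(-Nt^2/M)$ for $t$ in a suitable range, with $M$ depending only on $C_1,C_2$; for the remaining range of large $t$ the bound is trivial after enlarging $M$.

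Concretely, the step ordering I would follow is: (i) reduce to the MGF estimate $\mathbb{E}\exp(\langle v,Z\rangle_{\mathbb{H}}) \le \exp(C\|v\|_{\mathbb{H}}^2/N)$, valid for all $v$, by expanding $\langle v,Z\rangle_{\mathbb{H}} = N^{-1}\sum_n \epsilon_n\langle X_n,v\rangle_{\mathbb{H}}$, using independence to factor the MGF, using the sub-Gaussian bound on each $\epsilon_n$ to get $\mathbb{E}\exp(s\epsilon_n)\le\exp(C's^2)$, and summing $\sum_n \langle X_n,v\rangle_{\mathbb{H}}^2 \le \|v\|_{\mathbb{H}}^2 \sum_n\|X_n\|_{\mathbb{H}}^2 = N\|v\|_{\mathbb{H}}^2$; (ii) upgrade the scalar-marginal sub-Gaussianity to a tail bound on the norm $\|Z\|_{\mathbb{H}}$, either by a standard net/separability argument over the unit ball of a finite-dimensional approximating subspace combined with a monotone limit (using separability of $\mathbb{H}$), or by the cleaner observation that for any Hilbert-space-valued random element, $\mathbb{E}\exp(\lambda\langle v,Z\rangle)\le\exp(\sigma^2\|v\|^2\lambda^2/2)$ for all $v$ implies $P(\|Z\|>t)\le \exp(-t^2/(2\sigma^2)) \cdot (\text{polynomial or constant factor})$, which one can absorb into the constant; (iii) set $\sigma^2 \asymp 1/N$ and collect constants into $M$. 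The main obstacle is step (ii): passing from control of all one-dimensional marginals $\langle v, Z\rangle_{\mathbb{H}}$ to a clean exponential bound on the norm $\|Z\|_{\mathbb{H}}$ without picking up a dimension-dependent factor. The honest fixes are (a) to accept a harmless multiplicative constant (which is fine, since $M$ is only required to depend on $C_1,C_2$ and the inequality is stated with a single $\exp(-Nt^2/M)$, so any constant prefactor $\le e^{c}$ can be swallowed by shrinking the exponent, at least for $t$ bounded below), or (b) to mirror the scalar chaining in \cite{huang2008} directly on the Hilbert-valued sum using that $\mathbb{E}\|Z\|_{\mathbb{H}}^2 = N^{-2}\sum_n \mathbb{E}\epsilon_n^2\|X_n\|_{\mathbb{H}}^2 = \sigma^2/N$ together with a Burkholder/Pinelis-type inequality for Hilbert-valued martingales; I would go with whichever keeps the constant tracking most transparent.
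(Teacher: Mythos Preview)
Your step~(i) coincides with the paper's argument: fix a unit vector $u\in\mathbb{H}$, write $\langle Z,u\rangle_{\mathbb{H}}=N^{-1}\sum_n\epsilon_n\langle X_n,u\rangle_{\mathbb{H}}$, use Cauchy--Schwarz and the normalization to get $\sum_n\langle X_n,u\rangle_{\mathbb{H}}^2\le N$, and then invoke the scalar sub-Gaussian concentration (Lemma~1 of the supplement of \cite{huang2008}) to obtain $P(\langle Z,u\rangle_{\mathbb{H}}>t)\le\exp(-Nt^2/M)$ for every fixed unit $u$.

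Where you diverge is your step~(ii). The paper uses none of the machinery you propose (coordinate expansions, MGF integration over $v$, $\varepsilon$-nets, Pinelis/Burkholder inequalities). It dispatches the passage from one-dimensional marginals to $\|Z\|_{\mathbb{H}}$ in a single line via the variational characterization: since $\|Z\|_{\mathbb{H}}=\langle Z,u^\ast\rangle_{\mathbb{H}}$ for the maximizing direction $u^\ast=Z/\|Z\|_{\mathbb{H}}$, and the tail bound above ``holds uniformly over all $u$'', the paper simply applies that bound at $u^\ast$. The obstacle you flag---that going from control of every fixed marginal $\langle v,Z\rangle$ to control of the norm risks a dimension-dependent factor---is exactly the issue that $u^\ast$ is random (it depends on $\boldsymbol\epsilon$), so a bound proved for deterministic directions does not automatically transfer. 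The paper's proof does not comment on this; your more elaborate route would be one way to make that step airtight, but it is genuinely different from, and considerably heavier than, what the paper actually does.
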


The following theorem provides the bounds for the estimation and prediction errors of the unweighted scalar-on-function lasso. A similar theorem for classic multiple regression can be found in Chapter 6 of \cite{buhlmann2011}. The function-on-scalar case is proved in \cite{barber2017}. Our contribution is two-fold. First, we design a concentration inequality for Hilbert-valued design matrices. Then, we develop the estimation and error analysis in a separable Hilbert space.
The non-asymptotic and asymptotic convergence rates are nonparametric counterparts of those of the scalar case. This is because our Hilbert-valued concentration inequality mirrors the scalar structure. These rates are crucially dependent on the sequence $\{\alpha_{m}\}_{m\in \mathbb{N}}$ in Assumption~\ref{Assumption RE}. This sequence is supposed to be fixed in classic regression models, so it does not have any impact when the sample size increases. However, in the presence of functional predictors, this sequence always tends to zero, and its behavior affects the convergence rate of the estimator.

\begin{theorem} \label{thm:lasso}
Let Assumptions~\ref{Assumption 1}-\ref{Assumption RE} hold. Let \( \tilde{\boldsymbol{\beta}}^{(m)} \) be any minimizer of the lasso function \eqref{lass-fini} with weights equal to 1. 
Let \( \delta \in (0,1) \) and  
\(
\lambda \geq 2 \sqrt{n^{-1} \theta_1 M \log(p/\delta) },
\) 
where $M >0 $ is the constant in Lemma~\ref{lem:hilbert_concentration}. Then,
\[
P\left( \left\| \Gamma_{\boldsymbol{X X}}^{(m)1/2} \left( \tilde{\boldsymbol{\beta}}^{(m)} - \boldsymbol{\beta}^{\ast (m)} \right) \right\|_{\mathbb{K}} \leq \frac{4 \lambda \sqrt{p_0}}{ \alpha_m } \right) \geq 1 - \delta  \]
and
\[
\quad P\left( \left\| \tilde{\boldsymbol{\beta}}^{(m)} - \boldsymbol{\beta}^{\ast (m)} \right\|_{\ell_1/\mathbb{K}} \leq \frac{8 \lambda p_0}{\alpha_m^2 }\right)\geq 1 - \delta.
\]
\end{theorem}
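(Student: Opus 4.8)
The plan is to run the classical Lasso \emph{basic inequality} argument, transported to the finite-dimensional block structure of $\mathbb{K}^{(m)I}$: once one records the elementary identity $\langle\beta_i,X_{ni}^{(m)}\rangle_{\mathbb{H}}=\langle\beta_i,K(X_{ni}^{(m)})\rangle_{\mathbb{K}}$ for $\beta_i\in\mathbb{K}^{(m)}$ (immediate from \eqref{spec} and the definition of $\|\cdot\|_{\mathbb{K}}$), the whole problem is a group Lasso with $I$ blocks of dimension $m$ living in the Hilbert space $\mathbb{K}^{(m)}$. First I would set $\hat{\boldsymbol{\delta}}:=\tilde{\boldsymbol{\beta}}^{(m)}-\boldsymbol{\beta}^{\ast(m)}$ and use the minimality $L_{\lambda}(\tilde{\boldsymbol{\beta}}^{(m)})\le L_{\lambda}(\boldsymbol{\beta}^{\ast(m)})$. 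Expanding the squared residuals and using that, in the working truncated model \eqref{fini}, $Y_n-\sum_i\langle\beta_i^{\ast(m)},X_{ni}^{(m)}\rangle_{\mathbb{H}}=\epsilon_n$ (the approximation errors $e_{ni}$ of \eqref{approx} contributing, by Proposition~\ref{prop:err}, only a remainder that is lower order for $m$ large and absorbed into the constant), together with the computation $\frac1{2N}\sum_n\bigl(\sum_i\langle\hat\delta_i,K(X_{ni}^{(m)})\rangle_{\mathbb{K}}\bigr)^2=\tfrac12\langle\hat{\boldsymbol{\delta}},\hat\Gamma_{XX}^{(m)}\hat{\boldsymbol{\delta}}\rangle$, one obtains
\[
\tfrac12\bigl\|\Gamma_{XX}^{(m)1/2}\hat{\boldsymbol{\delta}}\bigr\|_{\mathbb{K}}^{2}\le\sum_{i=1}^{I}\bigl\langle\hat\delta_i,\,W_i\bigr\rangle_{\mathbb{K}}+\lambda\bigl(\|\boldsymbol{\beta}^{\ast(m)}\|_{\ell_1/\mathbb{K}}-\|\tilde{\boldsymbol{\beta}}^{(m)}\|_{\ell_1/\mathbb{K}}\bigr),\qquad W_i:=\frac1N\sum_{n=1}^{N}\epsilon_nK(X_{ni}^{(m)})\in\mathbb{K}^{(m)}.
\]

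The stochastic step is to control the cross term. By Cauchy--Schwarz in $\mathbb{K}$, $\sum_i\langle\hat\delta_i,W_i\rangle_{\mathbb{K}}\le\bigl(\max_{1\le i\le I}\|W_i\|_{\mathbb{K}}\bigr)\,\|\hat{\boldsymbol{\delta}}\|_{\ell_1/\mathbb{K}}$. I would apply Lemma~\ref{lem:hilbert_concentration} with the Hilbert space $\mathbb{K}^{(m)}$ and, for each fixed $i$, the vector $\bigl(K(X_{1i}^{(m)}),\dots,K(X_{Ni}^{(m)})\bigr)$, after rescaling that column to unit average $\mathbb{K}$-norm; since $\frac1N\sum_n\|K(X_{ni}^{(m)})\|_{\mathbb{K}}^{2}=\frac1N\sum_n\langle K(X_{ni}^{(m)}),X_{ni}^{(m)}\rangle_{\mathbb{H}}\le\theta_1$ by the standardization in Assumption~\ref{Assumption 1}, this rescaling costs only a bounded factor that I fold into $M$. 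A union bound over $i=1,\dots,I$ then gives $P(\max_i\|W_i\|_{\mathbb{K}}>t)\le I\exp(-Nt^2/M)$, and choosing $t$ so that the right-hand side equals $\delta$ yields $t=\sqrt{N^{-1}M\log(I/\delta)}\le\lambda$. Hence on an event $\mathcal{A}$ with $P(\mathcal{A})\ge1-\delta$ one has $\max_i\|W_i\|_{\mathbb{K}}\le\lambda$ (the precise coupling of $\lambda$ to this threshold is what produces the numerical constants $4$ and $8$).

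The remaining step is deterministic and runs on $\mathcal{A}$. Splitting the $\ell_1/\mathbb{K}$ terms over $\mathcal{S}=\{1,\dots,I_0\}$ and $\mathcal{S}^c$, using $\boldsymbol{\beta}^{\ast(m)}_{\mathcal{S}^c}=0$ and the reverse triangle inequality, the basic inequality collapses to $\tfrac12\|\Gamma_{XX}^{(m)1/2}\hat{\boldsymbol{\delta}}\|_{\mathbb{K}}^{2}\le\tfrac{3\lambda}{2}\|\hat{\boldsymbol{\delta}}_{\mathcal{S}}\|_{\ell_1/\mathbb{K}}-\tfrac{\lambda}{2}\|\hat{\boldsymbol{\delta}}_{\mathcal{S}^c}\|_{\ell_1/\mathbb{K}}$; nonnegativity of the left-hand side gives the cone bound $\|\hat{\boldsymbol{\delta}}_{\mathcal{S}^c}\|_{\ell_1/\mathbb{K}}\le3\|\hat{\boldsymbol{\delta}}_{\mathcal{S}}\|_{\ell_1/\mathbb{K}}$, i.e.\ $\hat{\boldsymbol{\delta}}$ lies in the cone of Assumption~\ref{Assumption 4} with $J=\mathcal{S}$, $|J|=I_0$, so $\|\Gamma_{XX}^{(m)1/2}\hat{\boldsymbol{\delta}}\|_{\mathbb{K}}\ge\alpha_m\|\hat{\boldsymbol{\delta}}\|_{\mathbb{K}}$. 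Combining this with $\|\hat{\boldsymbol{\delta}}_{\mathcal{S}}\|_{\ell_1/\mathbb{K}}\le\sqrt{I_0}\,\|\hat{\boldsymbol{\delta}}\|_{\mathbb{K}}\le(\sqrt{I_0}/\alpha_m)\|\Gamma_{XX}^{(m)1/2}\hat{\boldsymbol{\delta}}\|_{\mathbb{K}}$ and cancelling one power of $\|\Gamma_{XX}^{(m)1/2}\hat{\boldsymbol{\delta}}\|_{\mathbb{K}}$ gives the prediction bound $\|\Gamma_{XX}^{(m)1/2}\hat{\boldsymbol{\delta}}\|_{\mathbb{K}}\le 4\lambda\sqrt{I_0}/\alpha_m$; feeding this back into $\|\hat{\boldsymbol{\delta}}\|_{\ell_1/\mathbb{K}}\le4\|\hat{\boldsymbol{\delta}}_{\mathcal{S}}\|_{\ell_1/\mathbb{K}}\le(4\sqrt{I_0}/\alpha_m)\|\Gamma_{XX}^{(m)1/2}\hat{\boldsymbol{\delta}}\|_{\mathbb{K}}$ gives the $\ell_1/\mathbb{K}$ bound $8\lambda I_0/\alpha_m^2$. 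Both bounds hold on $\mathcal{A}$, hence with probability at least $1-\delta$.

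I expect the main obstacle to be the stochastic step: making Lemma~\ref{lem:hilbert_concentration} apply cleanly even though the $K(X_{ni}^{(m)})$ are not $\mathbb{K}$-normalized, and simultaneously checking under Assumption~\ref{Assumption 2} (via Proposition~\ref{prop:err}) that the truncation errors $e_{ni}$ entering through $\hat\Gamma_{YX_i}^{(m)}$ are genuinely lower order and do not degrade the rate. The algebra of the basic inequality and the cone/RE argument are otherwise routine, the only real care being the bookkeeping of the numerical constants.
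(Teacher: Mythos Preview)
Your proposal is correct and follows essentially the same route as the paper: the basic inequality from $L_\lambda(\tilde{\boldsymbol{\beta}}^{(m)})\le L_\lambda(\boldsymbol{\beta}^{\ast(m)})$, control of the cross term by Lemma~\ref{lem:hilbert_concentration} plus a union bound (the paper records this as a separate Lemma~\ref{lem:max_concentration_hilbert}), the resulting cone inequality (the paper's Lemma~\ref{lem:cone}), and then the RE condition combined with $\|\hat{\boldsymbol{\delta}}_{\mathcal{S}}\|_{\ell_1/\mathbb{K}}\le\sqrt{I_0}\,\|\hat{\boldsymbol{\delta}}\|_{\mathbb{K}}$. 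The only cosmetic difference is that the paper obtains the numerical constants $4$ and $8$ by first adding $\lambda\|\hat{\boldsymbol{\delta}}_{\mathcal{S}}\|_{\ell_1/\mathbb{K}}$ to both sides and then applying $2ab\le a^2+b^2$, rather than cancelling a power of $\|\Gamma_{XX}^{(m)1/2}\hat{\boldsymbol{\delta}}\|_{\mathbb{K}}$ directly; your parenthetical remark about the factor-of-two coupling between $\lambda$ and the noise threshold is exactly the bookkeeping needed to make the cone constant equal to $3$ and recover those same final constants.
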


The proof of Theorem \ref{thm:lasso} is provided in Subsection~S1.5 of the Supplementary material. 
Its usefulness is apparent in Corollary~\ref{cor.lasso}, which is a direct result of Theorem~\ref{thm:lasso} and makes (unweighted) lasso a very good starting point for the adaptive model. In this corollary, the upper bounds are selected in a way that is consistent with the literature 
\citep{lounici2011,barber2017,fan2017,parodi2018}.

\begin{corollary} \label{cor.lasso}
Let Assumptions~\ref{Assumption 1}-\ref{Assumption RE} hold, and $\lambda$ decays with rate \( \lambda \sim  \sqrt{\log(p)/p_0 n}. \)
Then, for any minimizer \( \tilde{\boldsymbol{\beta}}^{(m)} \) of the lasso function \eqref{lass-fini} with weights equal to 1, for $n\to+\infty$,
\[
\sup_{j \in \mathcal{A}^{(m)}} \left\| \tilde{\beta}_j^{(m)} - \beta_j^{\ast(m)} \right\|_{\mathbb{K}} 
= O_P \left( \sqrt{ r_m } \right)
\quad \text{with} \quad r_m := \frac{  p_0 \log p }{ \alpha_m^4 n }.
\]
\end{corollary}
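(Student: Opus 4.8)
The plan is to read Corollary~\ref{cor.lasso} off directly from the $\ell_1/\mathbb{K}$ estimation bound of Theorem~\ref{thm:lasso}, since that bound already has exactly the shape $\lambda I_0/\alpha_m^2$ that collapses to $\sqrt{r_N}$ once $\lambda$ is pinned at its prescribed rate. The first, elementary, step is to note that for any collection of vectors,
\[
\sup_{j \in \mathcal{S}} \left\| \tilde{\beta}_j^{(m)} - \beta_j^{*(m)} \right\|_{\mathbb{K}} \;\leq\; \sum_{i=1}^{I} \left\| \tilde{\beta}_i^{(m)} - \beta_i^{*(m)} \right\|_{\mathbb{K}} \;=\; \left\| \tilde{\boldsymbol{\beta}}^{(m)} - \boldsymbol{\beta}^{*(m)} \right\|_{\ell_1/\mathbb{K}},
\]
so that high-probability control of the $\ell_1/\mathbb{K}$-distance immediately controls the supremum of interest. (Only the estimation bound of Theorem~\ref{thm:lasso} is used; the prediction-norm bound is not needed, as it lacks the required $I_0/\alpha_m^2$ factor.)

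Next I would make the probability statement quantitative so as to extract an $O_p$ rate. Fix an arbitrary $\eta \in (0,1)$ and apply Theorem~\ref{thm:lasso} with $\delta = \eta$. The theorem requires $\lambda \geq \sqrt{N^{-1} M \log(I/\eta)}$; since $\log(I/\eta) = \log I + \log(1/\eta)$ and $\log I \to \infty$ as $N \to \infty$ (recall $I = I_N$ diverges, possibly exponentially in $N$), the prescribed rate $\lambda \sim \sqrt{\log I / N}$ meets this lower bound for all $N$ large enough, provided the implicit constant in $\sim$ is taken large enough relative to $\sqrt{M}$; write $\lambda \leq C_0 \sqrt{\log I / N}$ for the corresponding upper constant. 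Theorem~\ref{thm:lasso} then yields, with probability at least $1 - \eta$,
\[
\left\| \tilde{\boldsymbol{\beta}}^{(m)} - \boldsymbol{\beta}^{*(m)} \right\|_{\ell_1/\mathbb{K}} \;\leq\; \frac{8 \lambda I_0}{\alpha_m^2} \;\leq\; 8 C_0 \, \frac{I_0}{\alpha_m^2}\sqrt{\frac{\log I}{N}} \;=\; 8 C_0 \sqrt{\frac{I_0^2 \log I}{\alpha_m^4 N}} \;=\; 8 C_0 \sqrt{r_N}.
\]
Chaining this with the first display gives $P\bigl( \sup_{j \in \mathcal{S}} \| \tilde{\beta}_j^{(m)} - \beta_j^{*(m)} \|_{\mathbb{K}} > 8 C_0 \sqrt{r_N} \bigr) \leq \eta$ for all sufficiently large $N$, and since $\eta$ was arbitrary while $C_0$ does not depend on it, this is precisely the assertion $\sup_{j \in \mathcal{S}} \| \tilde{\beta}_j^{(m)} - \beta_j^{*(m)} \|_{\mathbb{K}} = O_p(\sqrt{r_N})$.

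The one genuinely delicate point is the rate-matching in the second step: one must check that the constant hidden in $\lambda \sim \sqrt{\log I/N}$ is compatible with the constant $M$ from the Hilbert-valued concentration inequality (Lemma~\ref{lem:hilbert_concentration}) and can also absorb the $\log(1/\eta)$ inflation coming from the chosen failure level. This is exactly where one uses $\log I \to \infty$: then $\log(I/\eta) \sim \log I$ and the mismatch disappears into the constant for $N$ large. Everything else — the elementary norm comparison and the passage from a fixed-$\delta$ bound to stochastic boundedness — is routine. I would also add a one-line remark that $m = m_N$ and $\alpha_m = \alpha_{m_N}$ are deterministic sequences (the design being deterministic), so $r_N$ is deterministic and the $O_p$ statement is the standard one, with no uniformity in $m$ being claimed.
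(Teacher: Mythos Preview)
Your proposal is correct and matches the paper's (implicit) argument: the paper simply declares the corollary ``a direct result of Theorem~\ref{thm:lasso}'' without spelling out a proof, and your derivation---bounding the sup by the $\ell_1/\mathbb{K}$-norm and plugging $\lambda \sim \sqrt{\log I/N}$ into the estimation bound $8\lambda I_0/\alpha_m^2$---is exactly that direct route. Your careful discussion of the constant-matching between $\lambda$ and $\sqrt{M\log(I/\eta)/N}$ (and the use of $\log I \to \infty$ to absorb $\log(1/\eta)$) is more explicit than anything the paper provides, but is the right way to make the $O_p$ statement rigorous.
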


We are now in a position to state our main result. We show that, when the sample size grows, SOFIA selects the true predictors and converges to the oracle estimator with probability tending to 1. The proof is provided in Subsection~S1.7 of the Supplementary material.

\begin{theorem}[\textbf{Functional oracle property}] \label{thm:main}
 Let Assumptions~\ref{Assumption 1}-\ref{Assumption assy} hold. For $j = 1, \cdots, p$ let $\tilde{\beta}_j^{(m)}$ be the estimator of Theorem~\ref{cor.lasso} and $\hat{\beta_j}^{(m)}$  a minimizer of the function~\eqref{lass-fini} with weights $\tilde{w}_j  = \|\tilde{\beta_j}^{(m)}\|^{-1}_{\mathbb{K}}$. Then, when $n\to +\infty$,

(i) \(
P \left(\hat{\mathcal{A}}^{(m)} = \mathcal{A} \right) \to 1.
\)

(ii) \(
 \left\| \hat{\boldsymbol{\beta}}^{(m)} - \hat{\boldsymbol{\beta}}^{O(m)} \right\|_{\mathbb{K}} = o_P(\tau_m/\sqrt{n}).
\)

(iii) If, in addition, $ \underset{j \in \mathcal{A}}{\max}\|\beta_j^{\ast^{(m)\perp}}\|_{\mathbb{K}} = {o(\tau_m/\sqrt{p_0 n})}$, then, as $n\to +\infty$,  
\[
 \left\| \hat{\boldsymbol{\beta}}^{(m)} - \boldsymbol{\beta}^{\ast} \right\|_{\mathbb{K}} = o_P(
 \tau_m/\sqrt{n}
 ).
\]
\end{theorem}

The first part of the theorem establishes that SOFIA recovers the true support on the $(m)$-dimensional subspace, with probability tending to one. This ensures consistent variable selection in the high-dimensional functional setting. The second part shows that SOFIA is equivalent to the oracle estimator in the finite-dimensional space $\mathbb{K}^{(m)}$. This means that SOFIA is asymptotically equivalent to the oracle estimator on $\mathbb{K}^{(m)}$. The third part claims that, if the tails of the projections of $\beta_j^*$ in $\mathbb{K}^{(m)}$ decay sufficiently fast, then the finite-dimensional lasso estimator converges to $\boldsymbol{\beta}^{\ast}$ in the strong topology of $\mathbb{K}$. As a result, SOFIA achieves full consistency in $\mathbb{K}$, under appropriate conditions, despite operating through finite-dimensional approximations.
The proof of Theorem~\ref{thm:main}, inspired by \cite{huang2008} and \cite{parodi2018}, relies on the functional subgradient of the objective function, together with a concentration inequality for sums of Hilbert-valued random elements. The latter yields high-probability bounds for the stochastic error term. Such bounds are then combined with the adaptive penalization framework to establish both support recovery and estimation consistency.

The following corollary is a direct result of Theorem~\ref{thm:main} when $p_0$ is fixed and $\tau_m$ grows at a polynomial rate. It gives a concrete convergence rate by selecting the truncation level $m$ as a function of $n$. 

\begin{corollary} \label{cor:cor2-rate}
Assume the conditions of Theorem~\ref{thm:main} and suppose in addition that $p_0$ is fixed. Let $\tau_m \asymp m^{\eta}$ for some $\eta>0$, and assume
\(  \underset{j \in \mathcal{A}}{\max}\|\beta^{*(m)\perp}_j\|_{\mathbb{K}} = o(m^{-s}),\) for some \(s>0\).
If we choose the sieve dimension 
\( m \asymp n^{1/(2(s+\eta))}.\)
Then, as $n\to+\infty$,
\[
\left\| \hat{\boldsymbol{\beta}}^{(m)} - \boldsymbol{\beta}^{\ast} \right\|_{\mathbb{K}} = O_P \left( n^{-\frac{s}{2(s+\eta)}} \right).
\]
\end{corollary}

\section{Implementation}\label{sec:impl}

To implement SOFIA, we take inspiration from the FLAME method of \cite{parodi2018} for the function-on-scalar regression case.
To find the minimizers of the objective function~\eqref{lass-fini}, we propose a coordinate descent algorithm based on the functional subgradient \citep[Chapter 16]{bauschke2011}. In Subsection~S1.6 of the Supplementary material
we show that, for $j = 1,\ldots, p$, the functional subgradient of the  function~\eqref{lass-fini} in $\mathbb{K}^{(m)}$ is 
\begin{equation*}
\begin{aligned}
\frac{\partial}{\partial \beta_j} L_\lambda 
\left(\boldsymbol{\beta} \right) &= - \frac{1}{n} \left( \sum_{i=1}^n  \left(Y_i - \sum_{k=1}^p \left\langle \beta_k, K\left(X_{ik}^{(m)}\right) \right\rangle_{\mathbb{K}} \right)K\left(X_{ij}^{(m)}\right) \right) \\
&\quad + \lambda \tilde{w}_j
           \begin{cases}
          \left\|\beta_j\right\|_{\mathbb{K}}^{-1} \beta_j & \text{if } \beta_j \neq 0, \\
       \left\lbrace h\in \mathbb{K}^{(m)} : \left\|h \right\|_\mathbb{K}\leq 1 \right\rbrace & \text{if } \beta_j = 0.
            \end{cases}
\end{aligned}
\end{equation*} 

To update the $j^{th}$ coordinate, we set 
\[\check{\beta}_j = n^{-1} \sum_{i=1}^n   (Y_i - \sum_{k\neq j} \langle \hat{\beta}_{k,i}^{[t-1]}, K(X_{ik}^{(m)}) \rangle_{\mathbb{K}} ) K(X_{ij}^{(m)}),\] 
where $\hat{\beta}_{k,i}^{[t-1]}$
are the estimates obtained in the previous iteration, and fix the other coordinates.  Then, we can separate the contribution of the $j^{th}$ predictor and write 
\begin{equation*}
\frac{\partial}{\partial \beta_j} L_\lambda 
\left(\boldsymbol{\beta} \right) =  - \check{\beta}_j + \hat{\Gamma}_{jj}^{(m)} \left( \beta_j\right)
 + \lambda \tilde{w}_j
           \begin{cases}
          \left\|\beta_j\right\|_{\mathbb{K}}^{-1} \beta_j & \text{if } \beta_j \neq 0, \\
       \left\lbrace h\in \mathbb{K}^{(m)} : \left\|h \right\|_\mathbb{K}\leq 1 \right\rbrace & \text{if } \beta_j = 0.
            \end{cases}
\end{equation*}  
Only for implementation, we assume that $\hat{\Gamma}_{jj}^{(m)}$ is equal to the identity matrix. This is equivalent to a well-known \emph{group-wise orthonormality condition} in the group lasso \citep{roche2023}. 
We then set the last equation to zero to find the minimum of $L_{\lambda}(\boldsymbol{\beta})$, obtaining 
\begin{equation}\label{min}
 \hat{\beta}_j = 
\begin{cases} 
0 & \text{if } \left\| \check{\beta}_j \right\|_\mathbb{K} \leq \lambda \tilde{w}_j, \\ 
  \left( 1 + \left( \lambda \tilde{w}_j\right) \left\| \hat{\beta}_j \right\|_{\mathbb{K}}^{-1} \right)^{-1} \check{\beta}_j & \text{if } \left\| \check{\beta}_j\right\|_\mathbb{K} > \lambda \tilde{w}_j. 
\end{cases}
\end{equation}
The first case in eq.~\eqref{min} provides an upper bound for $\lambda$. If $\lambda \geq (\tilde{w}_j n)^{-1} \|\sum_{i=1}^{n} Y_i K(X_{ij}^{(m)})\|_{\mathbb{K}}$ for all  $j=1 ,\cdots, p,$ then all predictors are discarded. Therefore, to select the grid of $\lambda$, we start with this value and decrease it in logarithmically spaced intervals at a prespecified rate until we reach the desired number of values for $\lambda$. The second case in eq.~\eqref{min} enables us to update non-zero coefficients. For $\hat{\beta}_j \neq 0$, we have
\[
 \hat{\beta}_j  = \left( 1+ \frac{\left(\lambda \tilde{w}_j\right)} {\left\| \hat{\beta}_j \right\|_{\mathbb{K}}} \right)^{-1} \check{\beta}_j.
\]
This equation does not have a closed form, even for the scalar design. The problem is that the norm of $\hat{\beta}_j$ on the right-hand side is unknown. To overcome this issue, we compute the $\mathbb{K}$-norm and apply Parseval’s identity:
\[
\left\|\hat{\beta}_j\right\|_{\mathbb{K}}^2  = \sum_{l=1}^{m} \left\langle \left( 1 + \frac{\lambda \tilde{w}_j}{\|\hat{\beta}_j\|_{\mathbb{K}}}\right)^{-1} \check{\beta}_j, v_l\right\rangle^2_{\mathbb{K}} = \left( 1 + \frac{\lambda \tilde{w}_j}{\|\hat{\beta}_j\|_{\mathbb{K}}}\right)^{-2} \sum_{l=1}^{m} \left\langle \check{\beta}_j, v_l\right\rangle^2_{\mathbb{K}}.
\]
After some simplifications, we obtain the equation 
\[
1 = \frac{\sum_{l=1}^{m} \left\langle \check{\beta}_j, v_l\right\rangle^2_{\mathbb{K}}}{\left( \|\hat{\beta}_j\|_{\mathbb{K}} + \lambda \tilde{w}_j\right)^2}.
\]
We use the NLOPT\_LN\_COBYLA algorithm from the \texttt{R} package \texttt{nloptr} to solve this nonlinear equation for $\|\hat{\beta}_j\|_{\mathbb{K}}$. Then, we use a coordinate descent algorithm to iteratively compute estimates $\hat{\boldsymbol{\beta}}^{[t]}$ for $t = 1,2, \ldots$ that converge to $\hat{\boldsymbol{\beta}}$. 
The iterative algorithm stops when the improvement in the $\mathbb{K}$-norm of the estimate of $\boldsymbol{\beta}^*$ reaches a pre-specified threshold. 
In particular, as stopping criteria, we use a maximum number of iterations of the nonlinear algorithm, a threshold on the improvement of the estimation error, and a kill switch, which means that the process stops when the number of selected variables exceeds a given number. Finally, we employ 5-fold cross-validation to determine the optimal value of $\lambda$. 

As explained above, we run the algorithm twice. First, we set all the weights $\tilde{w}_j$ to 1 to obtain an initial estimate of the coefficients. Then, for those that have a non-zero norm, we set $\tilde{w}_j = 1/ \| \tilde{\beta}_j^{(m)}\|_{\mathbb{K}}$ and run the algorithm a second time to improve the variable selection and refine the estimation.

In the simulation study and real data analysis, we work with the space \(\mathbb{H}=L^2\) and assume that $K$ is an integral operator associated with the kernel function $k(t,s)$, i.e., $(Kf)(t) = \int k(t,s) f(s) \, ds$. The space $\mathbb{K}$ corresponds to the RKHS induced by the kernel function $k(t,s)$.
In the literature, two common examples are the Gaussian kernel and the exponential or Laplacian kernel:
\[
k_{\infty}\left(t, s\right) = \exp\left( \frac{- \left|t - s\right|^2}{\rho} \right), \qquad \qquad
k_{1/2}\left(t, s\right) = \exp\left( \frac{- \left|t - s\right|}{\rho} \right),
\]
where the parameter $\rho$ controls the length scale. A lower value of $\rho$ results in a kernel that decays more rapidly as the distance \( |t - s| \) increases, leading to functions that exhibit more rapid variations over short distances. The Gaussian kernel $k_{\infty}\left(t, s\right)$ is infinitely differentiable and is well suited for modeling very smooth data. The exponential or Laplacian kernel $k_{1/2}\left(t, s\right)$ possesses only one continuous derivative, making it more appropriate for modeling rough data. These kernels belong to a broader class known as Matérn kernels, defined by
\[
k_{\nu}\left(t, s\right) = \frac{2^{1-\nu}}{\Gamma(\nu)} \left( \frac{\sqrt{2\nu} \left|t - s\right|}{\rho} \right)^\nu K_\nu \left( \frac{\sqrt{2\nu} \left|t - s\right|}{\rho} \right),
\]
where \(\Gamma\) is the Gamma function and \( K_\nu \) denotes the modified Bessel function of the second kind \citep{genton2001}. 
The predictors are expanded on the eigenfunctions of the operator $K$. To determine the number of basis functions, we use the minimum of $n$ and the smallest number of eigenvalues that satisfy
$ \sum_{l=1}^{m} \theta_l \geq 0.99 \sum_{l=1}^{\infty} \theta_l$, which mimics the 99\%  explanation of data variability in FPCA. 

The R code implementing SOFIA, as well as an example with simulated data, is provided on the GitHub page \verb|https://github.com/HedayatFathi/SOFIA|.

\section{Simulation study}\label{sec:sim}
In this section, we conduct a comprehensive simulation study to evaluate SOFIA's performance. We consider both the accuracy of variable selection and the predictive performance in several scenarios. We generate datasets under different model complexities and signal-to-noise ratios, and compare SOFIA with existing methods under various specifications. 

We take the simulation framework from \cite{gertheiss2013} and introduce some modifications in order to obtain scaled design matrices and curve domain $[0,1]$. In particular, we set $\mathbb{H} = L^2[0,1]$ and generate the response according to the model 
\[Y_i = \sum_{j=1}^{p_0} \int_{0}^{1} \beta_j(t) X_{ij}(t) \, dt + \varepsilon_i,\]
where $\varepsilon_i \sim \mathcal{N}(0, \sigma^2)$. The functional predictors are defined as

\[ 
X_{ij} \left( t \right) = 0.01 \left( \sum_{r=1}^{5} \left( a_{ijr} \sin \left( 2 \pi t (5 - a_{ijr}) \right) - m_{ijr} \right) + 15 \right),
\]

where $i = 1, \ldots, n$ denotes the observations, the sample size is fixed to $n=500$, and the interval $[0,1]$ is discretized using a grid of 50 equidistant points. The coefficients $a_{ijr}$ and $m_{ijr}$ are independently sampled from the uniform distributions $U(0, 5)$ and $U(0, 2\pi)$, respectively.

\begin{figure}[!b]
    \centering
    \begin{subfigure}[b]{0.45\textwidth} 
        \centering
        \includegraphics[width=\textwidth]{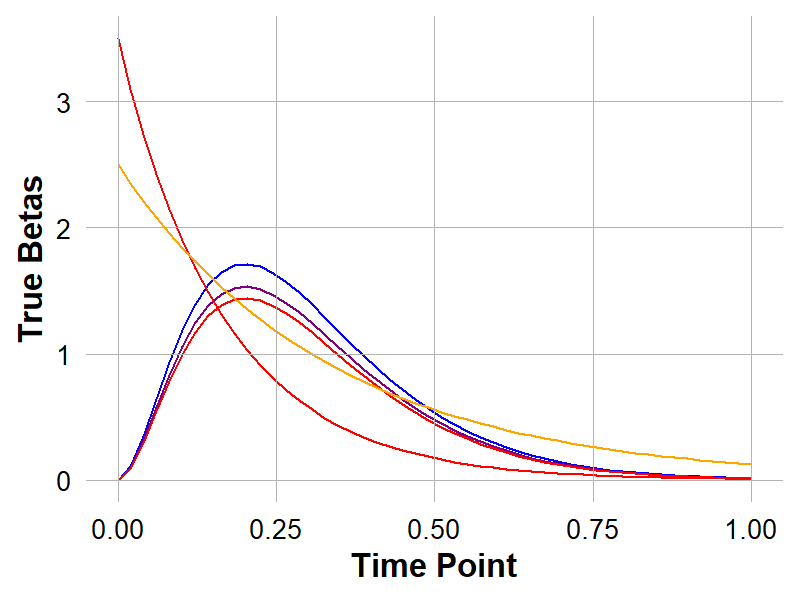}
        \caption{}
        \label{fig:sub1}
    \end{subfigure}
    \begin{subfigure}[b]{0.45\textwidth}
        \centering
        \includegraphics[width=\textwidth]{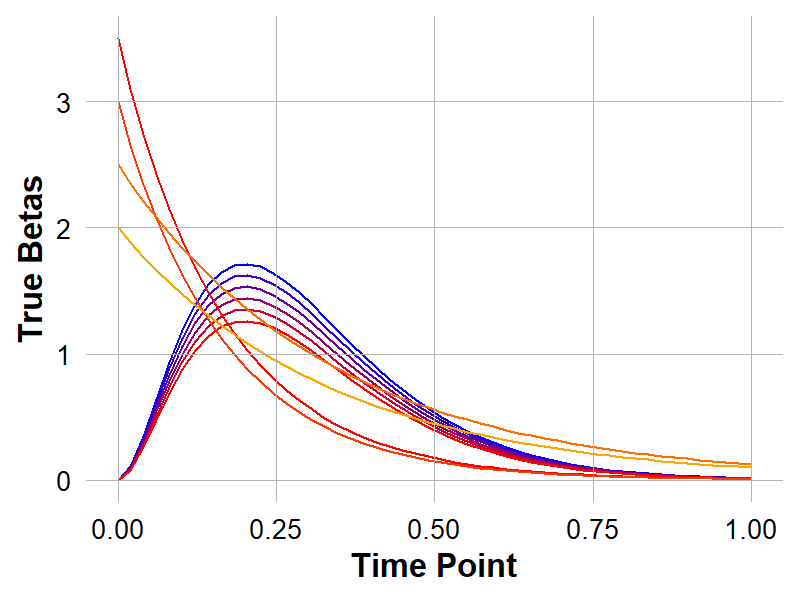} 
        \caption{}
        \label{fig:sub2}
    \end{subfigure}
    \caption{The coefficient functions $\beta_j(t)$ of the active predictors for (a) $p_0=5$ and (b) $p_0=10$.}
    \label{fig:True betas}
\end{figure}

The number of predictors $p$ and of active predictors $p_0$, indexed by $j$, vary between the different experimental scenarios. 
In particular, we consider \textbf{high-dimensional} and \textbf{moderate} scales, and \textbf{very sparse} and \textbf{less sparse} regimes for each scale. 
In the high-dimensional scale, we include $p=700$ functional predictors, which is particularly challenging due to the large number of features relative to the sample size $(n < p)$. Notably, among the competing methods, only FAStEN \citep{boschi2024} can handle this setting. In the high-dimensional scale, the very sparse regime considers five active predictors ($p_0=5$), while the less sparse regime includes 10 active predictors ($p_0=10$). 
In the moderate scale, the very sparse regime contains $p=30$ predictors of which $p_0=10$ are active, while the less sparse regime has $p=10$ predictors of which $p_0=5$ are active. This moderate scale setting enables broader comparisons across different model complexities with five competing methods. 
The coefficient functions $\beta_j$ of the active predictors are simulated based on the Gamma and exponential density functions, as displayed in  Figure~\ref{fig:True betas}. 


For each scenario, we consider five levels of signal-to-noise ratio, namely $\text{SNR} = 0.1, 0.5, 1, 10$, and $100$.  The variance of the noise term $\varepsilon_i$ is set according to the desired SNR level by $\sigma^2= var(\boldsymbol{Y}^{true})/SNR$. 
Here, the noise-free scalar response is defined as $Y^{true}_i = Y_i - \varepsilon_i$, and $var(\boldsymbol{Y}^{true})$ denotes its variance over $i = 1, \ldots, n$. 

The performance of the methods is evaluated based on the average number of true positives (TP) and false positives (FP) across 10 replications. In addition, we assess prediction accuracy by computing the root mean squared error (RMSE) on an independently generated test set
of size $n_{\text{test}} = 100$. 

To choose the tuning parameter $\lambda$ in SOFIA, we examine 100 candidate values spaced logarithmically with a ratio of 0.05 (see Section~\ref{sec:impl}), using a 5-fold cross-validation. 

\subsection{High-dimensional scale}
\small
\begin{figure}[htbp]
    \centering

    {\large \textbf{SNR = 1}}

    \vspace{7mm} 

    \begin{subfigure}[b]{0.8\textwidth}
        \centering
        \includegraphics[width=\textwidth]{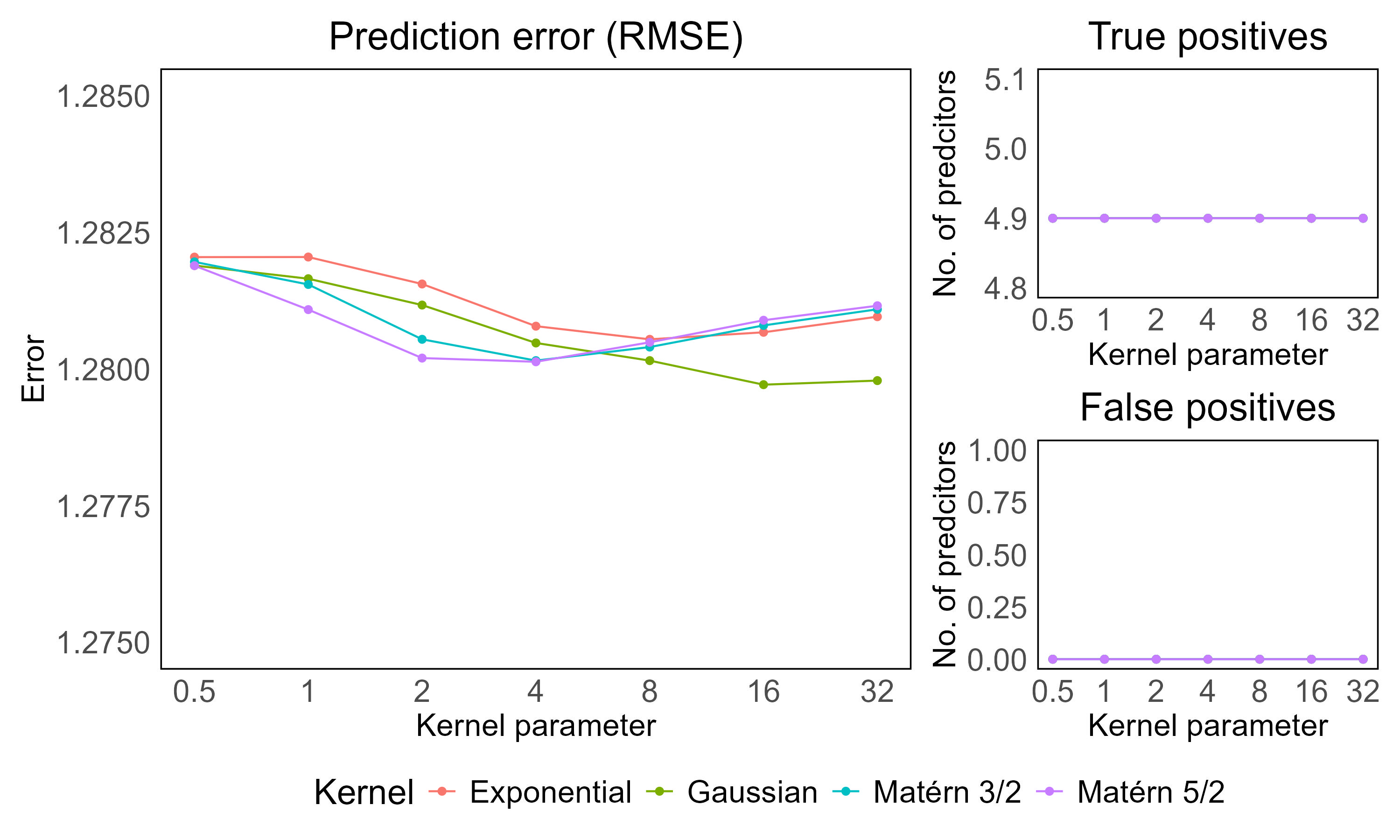}
        \caption{Very sparse regime ($p_0=5$).}
        \label{fig:SNR1_5}
    \end{subfigure}

    \vspace{7mm} 

    \begin{subfigure}[b]{0.8\textwidth}
        \centering
        \includegraphics[width=\textwidth]{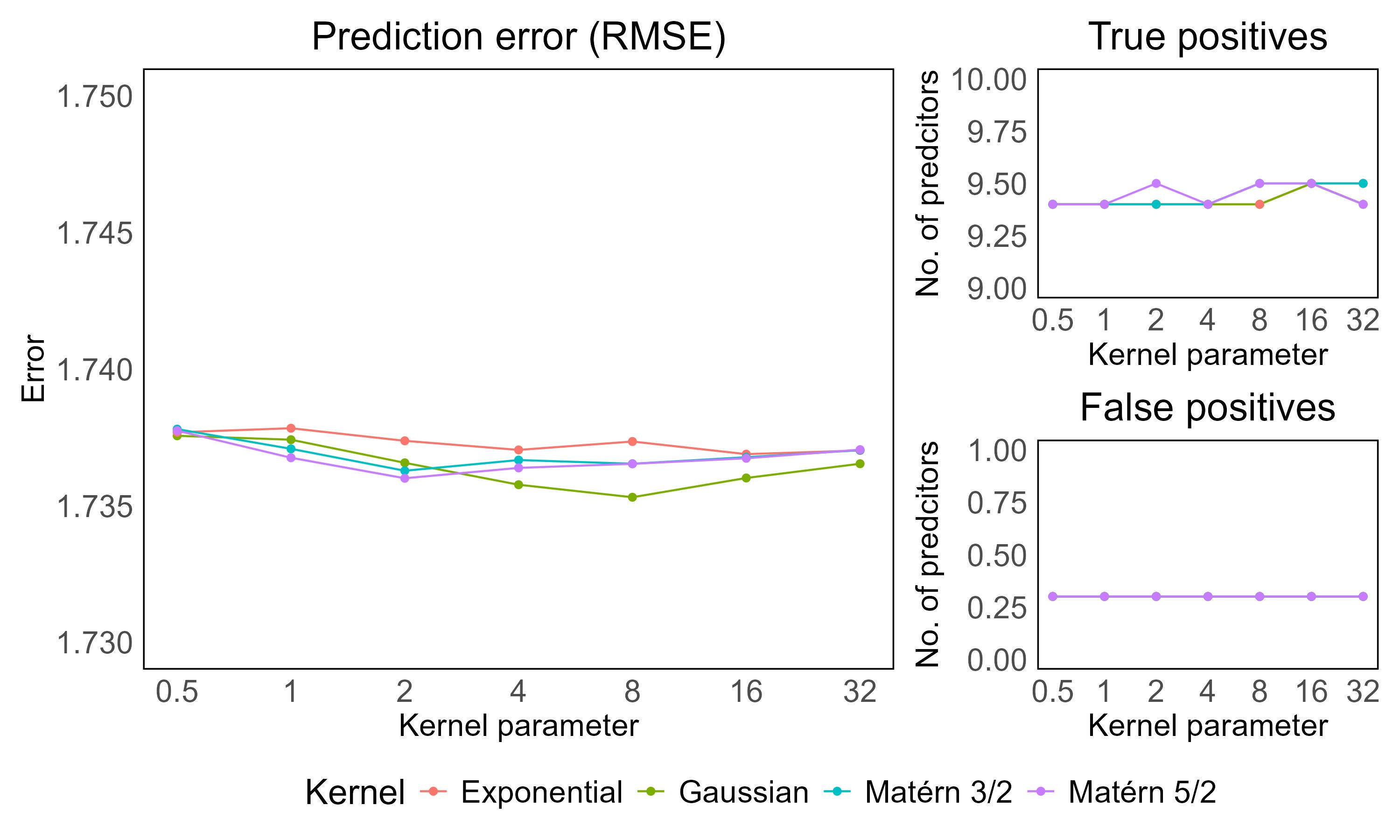}
        \caption{Less sparse regime ($p_0=10$).}
        \label{fig:SNR1_10}
    \end{subfigure}

    \caption{Comparison of the performance of different kernels under varying parameter values for SNR = 1. Subfigures represent the true positives, false positives, and root mean squared error (RMSE) in the very sparse and less sparse regimes, in the high-dimensional scale ($p=700$).}
    \label{fig:SNR1}
\end{figure}
\begin{figure}[htbp]

    \centering

    {\large \textbf{SNR = 10}}

    \vspace{7mm} 

    \begin{subfigure}[b]{0.8\textwidth}
        \centering
        \includegraphics[width=\textwidth]{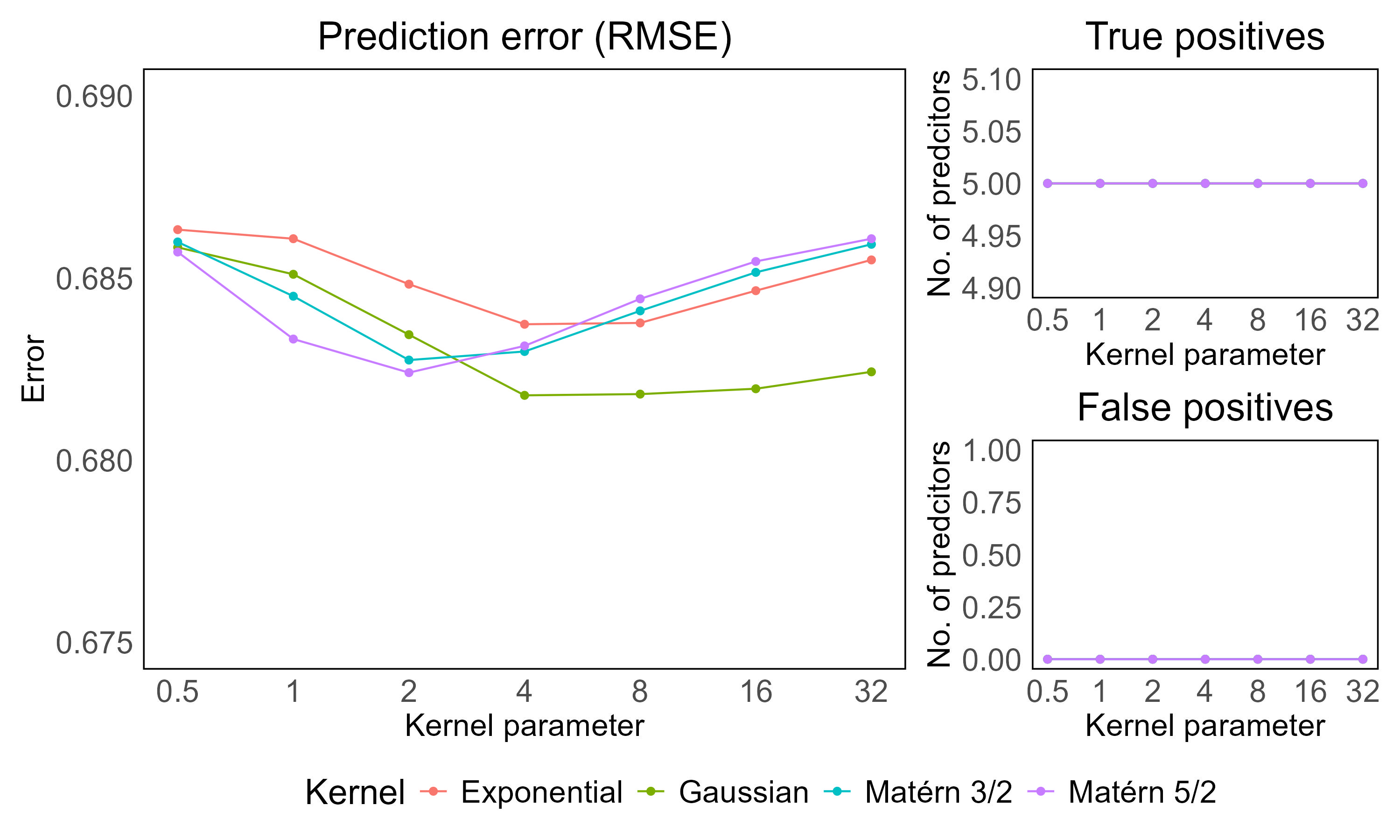}
        \caption{Very sparse regime ($p_0=5$).}
        \label{fig:SNR10_5}
    \end{subfigure}

    \vspace{7mm} 

    \begin{subfigure}[b]{0.8\textwidth}
        \centering
        \includegraphics[width=\textwidth]{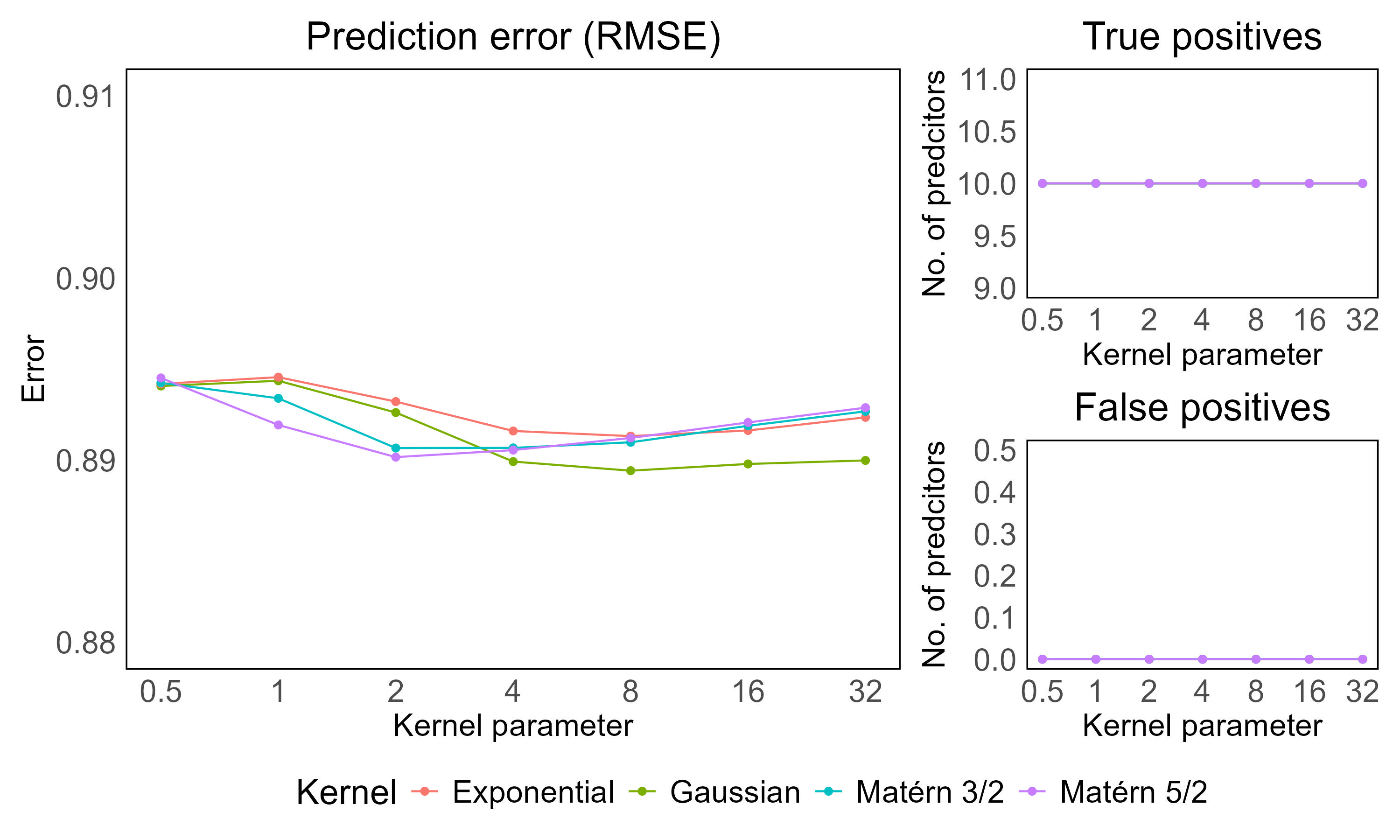}
        \caption{Less sparse regime ($p_0=10$).}
        \label{fig:SNR10_10}
    \end{subfigure}

    \caption{Comparison of the performance of different kernels under varying parameter values for SNR = 10. Subfigures represent the true positives, false positives, and root mean squared error (RMSE) in the very sparse and less sparse regimes, in the high-dimensional scale ($p=700$).}
    \label{fig:SNR10}
\end{figure}
We first compare the performance of SOFIA using different kernel functions (see Section~\ref{sec:impl}) in the high-dimensional scale simulation scenarios, to highlight the impact of the kernel and the choice of the kernel parameter on variable selection and estimation accuracy. In particular, we consider the Gaussian kernel, the exponential kernel, and the Matérn kernels with $\nu = 5/2$ and $\nu = 3/2$ (see Section~\ref{sec:impl}). 
For these simulations, we set the kill switch, i.e.~the maximum number of selected predictors, to $2p_0$. 

Figures~\ref{fig:SNR1} and~\ref{fig:SNR10} display the performance across a range of kernel parameter values for SNR = 1 and 10, respectively, in both regimes. 
Across all kernels and parameters, SOFIA consistently achieves near-perfect support recovery for both SNR levels. 
For SNR = 10, all kernels perform similarly well, with perfect support recovery (Figure~\ref{fig:SNR10}). As expected, in the more noisy case (SNR = 1) sometimes SOFIA misses an active predictor and/or includes a false positive (Figure~\ref{fig:SNR1}). 
For both SNR levels, smoother kernels, especially the Gaussian kernel with high kernel parameters, tend to yield slightly lower RMSE. This behavior is expected, since the simulated data involve smooth coefficient functions and predictors. 
Results for the high-noise cases (SNR = 0.1 and SNR = 0.5) and the near-noise-free case (SNR = 100) are reported in Subsection~S2.1 of the Supplementary material, and show that all kernels and parameters exhibit comparable performance. Importantly, performance in terms of true and false positive is still pretty good in the case of SNR = 0.5, while it deteriorates markedly for the very noisy case of SNR = 0.1.
In Subsection~S2.2 of the Supplementary material, we show the estimated coefficient functions $\hat{\boldsymbol{\beta}}$ along with the true $\boldsymbol{\beta}^*$ in the very sparse regime, for the Gaussian kernel with $\rho = 8$ and for the exponential kernel with $\rho = 2$. The first kernel serves as a reference and provides smooth estimates of the coefficients, while the second one captures less smooth structures. In both cases, SOFIA successfully recovers the overall shape of the true coefficients. 
The apparent fluctuation of the estimates is primarily controlled by the kernel parameter $\rho$ rather than by the kernel family. A small $\rho$ leads to more local fluctuations, whereas a larger $\rho$ produces smoother estimates. 

Second, we compare the performance of SOFIA (with Gaussian kernel and $\rho =8$) and FAStEN \citep{boschi2024} across different SNR levels and both sparsity levels. For both methods, we use a kill switch of $2p_0$ predictors. Note that FAStEN is the only competing method that can handle this high-dimensional scale simulation. 

\sisetup{
    table-format=2.2,
    separate-uncertainty,
    detect-weight=true,
    detect-inline-weight=math
}

\begin{table}[htbp]
    \centering
    \small
    \rowcolors{3}{gray!10}{white}
    \begin{tabular}{c c c c c c c c}

        \toprule
        \multicolumn{2}{c}{} & \multicolumn{3}{c}{\textbf{Very sparse ($p_0=5$)}} & \multicolumn{3}{c}{\textbf{Less sparse ($p_0=10$)}} \\
        \cmidrule(lr){3-5} \cmidrule(lr){6-8}
        \textbf{SNR} & \textbf{Method} & \textbf{TP} & \textbf{FP} & \textbf{RMSE} & \textbf{TP} & \textbf{FP} & \textbf{RMSE} \\
        \midrule
        & FAStEN & {1.3} & \textbf{0.8} & 4.42 & 0.9 & \textbf{0.6} & 5.72 \\
        \multirow{-2}{*}{0.1} & SOFIA & \textbf{2.3} & 1.4 & \textbf{3.67} & \textbf{2.1} & 4.8 & \textbf{4.79} \vspace{1mm} \\
        \addlinespace
         & FAStEN & \textbf{4.9} & \textbf{0.9} & 2.22 & 7.8 & 2.4 & 2.87 \\
        \multirow{-2}{*}{0.5} & SOFIA & {4.7} & 0 & \textbf{1.72} & \textbf{8.2} & \textbf{1.5} & \textbf{2.32} \vspace{1mm} \\
        \addlinespace
         & FAStEN & \textbf{5} & \textbf{0.1} & 1.75 & \textbf{9.6} & \textbf{1.8} & 2.28 \\
        \multirow{-2}{*}{1} & SOFIA & {4.9} & 0 & \textbf{1.28} & 9.4 & 0.3 & \textbf{1.74} \vspace{1mm}\\
        \addlinespace
         & FAStEN & \textbf{5} & \textbf{0} & 1.21 & \textbf{10} & 0.1 & 1.60 \\
        \multirow{-2}{*}{10} & SOFIA & \textbf{5} & \textbf{0} & \textbf{0.68} & \textbf{10} & \textbf{0} & \textbf{0.89} \vspace{1mm}\\
        \addlinespace
         & FAStEN & \textbf{5} & \textbf{0} & 1.14 & \textbf{10} & \textbf{0} & 1.52 \\
        \multirow{-2}{*}{100}  & SOFIA & \textbf{5} & \textbf{0} & \textbf{0.60} & \textbf{10} & \textbf{0} & \textbf{0.73} \vspace{1mm}\\
        \bottomrule
    \end{tabular}
    \caption{Comparison between SOFIA and FAStEN across different SNR levels in the high-dimensional scale ($p=700$). Bold entries highlight the best-performing method for each metric.}

    \label{tab:comp_high-dimensional}
\end{table}

Table~\ref{tab:comp_high-dimensional} shows the results of this simulation. For all SNR levels but SNR = 0.1, both methods identify all or nearly all true active predictors, with very few false positives. Both methods perfectly recover the true support when the noise is low (SNR = 100 and 10). Remarkably, both methods also correctly identify almost all true active predictors, with very few false positives, for SNR = 1 and 0.5. Instead, the case of SNR = 0.1 proves to be extremely challenging for both methodologies. 
Importantly, SOFIA consistently achieves a lower RMSE than FAStEN across all SNR levels, while maintaining strong support recovery. 
For SNR = 0.1 and 0.5, SOFIA generally recovers more active predictors, albeit with slightly higher false positive rate. However, this trade-off results in a better predictive accuracy. 
At higher SNR levels (SNR = 1, 10, and 100), SOFIA continues to produce a lower RMSE. 
In general, SOFIA demonstrates superior predictive performance while maintaining competitive (often better) variable selection accuracy.

\subsection{Moderate scale}
Since most existing methods are computationally infeasible for a large number of predictors and/or cannot handle cases where the number of predictors exceeds the sample size (such as the high-dimensional case shown above), we consider a moderate scale simulation setting to allow for a broader performance comparison. 
We compare the performance of SOFIA (with Gaussian kernel and $\rho=8$) with five other methods: the standard non-adaptive group lasso \citep{fan2015}, adaptive lasso \citep[Adapt1 and Adapt2;][]{gertheiss2013}, the method of \cite{roche2023}, and FAStEN \citep{boschi2024}. To ensure a fair comparison, all methods are run without a kill switch and with their own default configuration parameters. 

\begin{table}[!tbp] 
\centering
\small
\rowcolors{3}{gray!10}{white}
\begin{tabular}{c c c c c c c c}
    \toprule
    \multicolumn{2}{c}{} & \multicolumn{3}{c}{\textbf{Very sparse}} & \multicolumn{3}{c}{\textbf{Less sparse}} \\
    \multicolumn{2}{c}{} & \multicolumn{3}{c}{\textbf{($p=30, ~p_0=10$)}} & \multicolumn{3}{c}{\textbf{($p= 10, ~p_0=5$)}} \\
    \cmidrule(lr){3-5} \cmidrule(lr){6-8}
    \textbf{SNR} & \textbf{Method} & \textbf{TP} & \textbf{FP} & \textbf{RMSE} & \textbf{TP} & \textbf{FP} & \textbf{RMSE} \\
    \midrule
    & Standard & 8.8 & 10.1 & 4.96 & \textbf{5} & 2.7 & \textbf{3.73} \\
    & Adapt1   & \textbf{9.1} & 10.4 & 5.23 & \textbf{5} & 3.2 & 3.9 \\
    & Adapt2   & 8.8 & 8.8 & 5.14 & \textbf{5} & 2.5 & 3.77 \\
    \multirow{-2}{*}{0.1} & Roche   & 0 & \textbf{0} & 5.22 & 0 & \textbf{0} & 3.86 \\
    & FAStEN   & 2.6 & 0.6 & 5.77 & 2.8 & 0.2 & 4.50 \\
    & SOFIA    & 6.2 & 2.3 & \textbf{4.73} & 4.2 & 0.4 & 4.01 \\
    \midrule
    & Standard & \textbf{10} & 11.3 & 2.50 & \textbf{5} & 2.6 & \text{1.72} \\
    & Adapt1   & \textbf{10} & 11.5 & 2.63 & \textbf{5} & 2.7 & 1.87 \\
    & Adapt2   & \textbf{10} & 8.8 & 2.57 & \textbf{5} & 2 & 1.77 \\
    \multirow{-2}{*}{0.5} & Roche   & 0 & \textbf{0} & 2.72 & 0 & \textbf{0} & 2.02 \\
    & FAStEN   & 9.6 & 1.4 & 2.93 & \textbf{5} & 0.2 & 2.3 \\
    & SOFIA    & 9.7 & 0.6 & \textbf{2.28} & \textbf{5} & \textbf{0} & 2.04 \\
    \midrule
    & Standard & \textbf{10} & 10.8 & 1.98 & \textbf{5} & 2.3 & \textbf{1.26} \\
    & Adapt1   & \textbf{10} & 12.3 & 2.10 & \textbf{5} & 2.5 & 1.41 \\
    & Adapt2   & \textbf{10} & 7.8 & 2.04 & \textbf{5} & 1.7 & 1.31 \\
    \multirow{-2}{*}{1} & Roche    & 0 & \textbf{0} & 2.23 & 0 & \textbf{0} & 1.65 \\ 
    & FAStEN   & \textbf{10} & 0.5 & 2.34 & \textbf{5} & 0.2 & 1.85 \\
    & SOFIA    & 9.9 & \textbf{0} & \textbf{1.72} & \textbf{5} & \textbf{0} & 1.57 \\
    \midrule 
    & Standard & \textbf{10} & 5.9 & 1.33 & \textbf{5} & 0.6 & \textbf{0.59} \\
    & Adapt1   & \textbf{10} & 7.4 & \textbf{1.45} & \textbf{5} & 1.1 & 0.83 \\
    10 & Adapt2   & \textbf{10} & 4.6 & 1.43 & \textbf{5} & 0.5 & 0.66 \\
    & Roche    & 9.66 & \textbf{0}  & 1.25  & \textbf{5} & \textbf{0} & 1.04 \\
    & FAStEN   & \textbf{10} & 0.4 & 1.64 & \textbf{5} & \textbf{0} & 1.28 \\
    & SOFIA    & \textbf{10} & \textbf{0} & \textbf{0.90} & \textbf{5} & \textbf{0} & 0.92 \\
    \midrule 
    & Standard & \textbf{10} & 3.4 & 1.24 & \textbf{5} & 0.2 & \textbf{0.48} \\
    & Adapt1   & \textbf{10} & 5.1 & 1.38 & \textbf{5} & 0.5 & 0.76 \\
    100 & Adapt2   & \textbf{10} & 3 & 1.36 & \textbf{5} & 0.2 & 0.54 \\ 
    & Roche    & \textbf{10}  & \textbf{0} & \textbf{0.57} & \textbf{5} & \textbf{0} & 1.00 \\ 
    & FAStEN   & \textbf{10} & \textbf{0} & 1.55 & \textbf{5} & \textbf{0} & 1.20 \\
    & SOFIA    & \textbf{10} & \textbf{0} & 0.74 & \textbf{5} & \textbf{0} & 0.79 \\
    \bottomrule
\end{tabular}
\caption{Comparison of SOFIA with five competitor methods across different SNR levels in the moderate scale. Bold entries highlight the best-performing method for each metric.
}
\label{tab:small_comparition} 
\end{table}

Table~\ref{tab:small_comparition} summarizes the results. 
In low noise conditions (SNR = 100 and 10), all methods perform generally well. In particular, SOFIA, FAStEN, and Roche consistently achieve high true positive recovery with minimal false positives across both regimes. Although standard and adaptive lasso sometimes achieve better RMSE and always select all true active predictors, their elevated false-positive rates make them less suitable for tasks that require precise variable selection and parsimonious models. 
In high noise conditions (SNR = 0.5 and 1), SOFIA and FAStEN effectively balance selection and precision, achieving very high true positive rates with relatively low false positives. In contrast, Roche does not select any feature in these settings (hence showing both 0 true and false positive rates). Both standard and adaptive lasso correctly select all true active predictors but also include several false positives, with Adapt2 performing slightly better than Adapt1 and standard lasso in both regimes.
Finally, the very noisy case of SNR = 0.1 is challenging for all methods. However, also in this case SOFIA shows good results both in terms of support recovery and of RMSE.

\section{Real data analysis}\label{sec:real}

In this section, we apply SOFIA to a real-world economic dataset to identify the most relevant functional predictors to predict future GDP growth in Canada. Our goal is twofold: first, to assess SOFIA's capacity to select meaningful variables in a high-dimensional real dataset; second, to evaluate the predictive performance of SOFIA compared to classic forecasting methods and classic functional regression. 
The time-varying nature of economic and financial data makes FDA highly relevant for their analysis. One challenge in these research areas is the high frequency of the data, which can lead to overfitting, noise, and computational complexity. FDA offers robust methods to handle and analyze high-frequency data by smoothing and leveraging their continuous nature \citep{zhang2023}. Another challenge is the heterogeneity of the data time frequencies: some variables are reported monthly, while others are reported quarterly or annually \citep{kapetanios2016}. In FDA, different measurement frequencies are easily managed because all variables are transformed into curves on a continuous domain and can then be evaluated on the same grid. For recent applications of FSA in economy and finance see, e.g.,~\cite{reimherr2018,severino2022,cremona2023}

The prediction of GDP growth is crucial for macroeconomic analysis, as it helps policymakers, financial institutions, and businesses to make decisions. We are interested in cases where multiple macroeconomic predictors are observed and we want to determine the relevant variables to predict GDP growth. 
We use a dataset from \cite{fortin2022}, which provides a comprehensive macroeconomic database on Canadian economic indicators. We consider a balanced version of the data  that overcomes the issues related to stationarity, missing data, and discontinuities. We use monthly data from January 1981 to December 2019, resulting in 468 monthly observations. This time frame excludes the unpredictable outbreak of the COVID-19 pandemic.
The dataset includes many international, national, and provincial economic variables. To ensure meaningful and interpretable results, we restrict our analysis to national-level indicators and select 20 variables that are commonly used in GDP prediction studies. These variables represent various financial sectors and include predictors such as total GDP growth (\texttt{GDP\_new}), industrial production GDP growth (\texttt{IP\_new}), retail trade GDP (\texttt{RT\_new}) growth, public administration GDP growth (\texttt{PA\_new}), oil production  (\texttt{OIL\_CAN\_new}), employment rate (\texttt{EMP\_CAN}), unemployment rate (\texttt{UNEMP\_CAN}), new housing price index (\texttt{NHOUSE\_P\_CAN}), housing starts units (\texttt{Hstart\_CAN\_new}), total building permits (\texttt{build\_Total\_CAN\_new}), M3 money supply (\texttt{M3}), household credit (\texttt{CRED\_HOUS}), business credit (\texttt{CRED\_BUS}), bank rate (\texttt{BANK\_RATE\_L}),  three-month Treasury bill rate (\texttt{TBILL\_3M}), total Canada’s official international reserves (\texttt{RES\_TOT}), total imports (\texttt{Imp\_BP\_new}), total exports (\texttt{Exp\_BP\_new}), exchange rate between the US and the Canadian dollar (\texttt{USDCAD\_new}), and the inflation rate measured by the Consumer Price Index (\texttt{CPI\_ALL\_CAN}). 
For complete definitions of the variables, see \cite{fortin2022}. 

We construct functional predictors using monthly data by considering 12 consecutive values as a curve. We do not apply additional smoothing, so that the functional representation preserves the original structure of the data. Indeed, since data have been transformed in the original paper to reach stationarity, smoothing would artificially flatten the curves filtering out important information.  
To generate enough observations, we employ a rolling window method with a one-month shift. Precisely, each functional observation corresponds to a sequence of 12 consecutive months: the first observation consists of a curve obtained by the data from months 1 to 12, the second from months 2 to 13, and so on. Following this procedure, we obtain a dataset that contains 456 observations and 20 predictors. 
For the scalar response, we consider the GDP growth of the month immediately following the 12-month predictors. GDP growth is measured as the change in GDP from one month to the next, using the difference of logarithmic values. This transformed variable is called \texttt{GDP\_new} in the dataset of \cite{fortin2022}. 

Since we want to compare the predictive performance of SOFIA with other methods, we designate the final 36 months of the dataset as the test set for prediction comparison, while the remaining 420 observations are used for fitting the models.
We run SOFIA with an exponential kernel and parameter 2, which is well-suited for handling rough data (the functional predictors are not smoothed) and exhibits good robustness to noise. For the selection of $\lambda$, we use a grid of 100 candidate values with a ratio of 0.1, and a rolling-window cross-validation. Specifically, the first 315 observations (75\% of the data) are used as the initial training set, while the next 21 observations (5\% of the data) are used as the test set. In each subsequent fold, we shift both the training and test sets forward by 21 observations (5\% of the data). This procedure results in a 5-fold rolling-window cross-validation framework.

SOFIA selects three functional predictors, namely the total GDP growth (\texttt{GDP\_new}), employment rate (\texttt{EMP\_CAN}), and unemployment rate (\texttt{UNEMP\_CAN}).  
Figure~\ref{fig:GDP} presents the estimated $\hat{\boldsymbol{\beta}}$ coefficients corresponding to the selected predictors. The estimated coefficient functions reveal that the GDP growth and employment rate have a positive relation with the GDP growth of the next month, and their effects grow stronger over time until the 10\textsuperscript{th} month. This suggests that, in the prediction of GDP growth, the GDP growth rate observed three months before has the highest predictive power relative to the other months. In contrast, the unemployment rate shows a negative effect, indicating that higher unemployment is associated with lower GDP growth. Similarly to the other two selected variables, the values of the three months preceding the forecast month are the most influential for this predictor. Interestingly, at the beginning of the period, all three predictors show a pattern opposite to what is observed later. This may suggest a form of mean reversion, where early movements are gradually offset or reversed as the economy adjusts over time.

\begin{figure}[!hbtp] 
    \centering
    \includegraphics[width=0.9\textwidth]{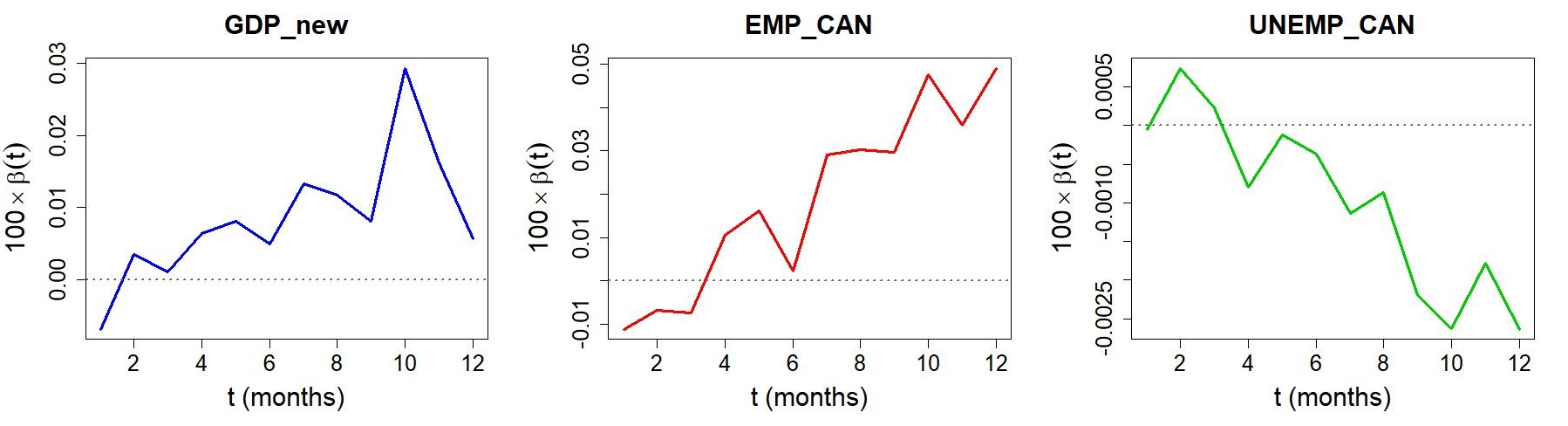} 
    \caption{Estimated coefficients ($\hat{\boldsymbol{\beta}}$) for the three selected predictors for GDP growth.}
    \label{fig:GDP} 
\end{figure}

We compare the performance of SOFIA with three classic time series models as benchmarks: the AutoRegressive Integrated Moving Average (ARIMA), the AutoRegressive Integrated Moving Average with Exogenous Regressors \citep[ARIMAX;][]{gu2022}, and the naïve model that forecasts the next GDP growth as the last observed value of GDP growth. In addition, we consider a simple scalar-on-function regression model (without penalization) that uses only \texttt{GDP\_new} as functional predictor.
We estimate ARIMA and ARIMAX using the function \texttt{auto.arima()} from the \texttt{R} package \textbf{forecast} (with default parameters), which automatically selects the optimal parameters based on the corrected Akaike Information Criterion. 
Then, we give the fitted models as input to the \texttt{Arima()} function to forecast GDP growth for the following month on the test set without reestimating the model. In ARIMAX, we include as exogenous variables all the predictors employed by SOFIA at time \( t - 1 \), ensuring a fair comparison in terms of the variables considered. 
As for the simple scalar-on-function model, we implement it using a non-penalized version of SOFIA (i.e.,~ setting $\lambda=0$). 
To compare the predictive performance of the methods, we calculate the prediction error as the root mean squared error (RMSE)
and the mean absolute value (MAE) over the 36 months in the test set.

Table~\ref{tab:rmse_comparison} shows the test RMSE and MAE for the size compared methods. For readability, all RMSE and MAE values in the table are multiplied by 100. The ARIMA model selected by \texttt{auto.arima()} is an ARIMA(4,0,5) with a nonzero mean, 
while the selected ARIMAX model has ARIMA(3,0,0) errors. 
We observe that SOFIA achieves the lowest prediction error among the considered methods, slightly outperforming ARIMA in both RMSE and MAE. Although the ARIMAX model shows a marginal improvement in MAE compared to ARIMA, its RMSE is higher, suggesting that the inclusion of all exogenous predictors does not consistently enhance forecast accuracy. The Simple scalar-on-function regression model performs moderately well but is clearly outperformed by SOFIA, as expected. This highlights the benefit of selecting informative predictors through the SOFIA procedure rather than relying on a single functional covariate.

\begin{table}[!hbtp]
    \centering
     
    \begin{tabular}{lccccc}
        \toprule
        \textbf{Model} &  \textbf{ARIMA} & \textbf{ARIMAX}& \textbf{Naïve} & \textbf{Simple SoF} & \textbf{SOFIA} \\
        \midrule
           \textbf{RMSE}  & 0.2163 & 0.2361& 0.2625 & 0.2234 & \textbf{0.2125}\\
         \textbf{MAE}  & 0.1781  & 0.1742 & 0.1939 &0.1830	 & \textbf{0.1651} \\
        \bottomrule
    \end{tabular}
    \caption{Test RMSE and MAE ($\times$ 100) for the different methods.}
    \label{tab:rmse_comparison}
 \end{table}

\section{Conclusions and future work}
\label{sec:concl}
In this paper, we introduced a novel method, named SOFIA, for variable selection in scalar-on-function regression models, which enables the estimation of functional coefficients with a desired level of smoothness (or, more in general, regularity). To achieve this goal, we require the coefficients to lie in an RKHS-type space that determines the smoothness level, while allowing the predictors to reside in a general Hilbert space. We showed that SOFIA satisfies the functional oracle property, even in scenarios where the number of predictors is much larger than the sample size. In an extensive simulation study, we demonstrated the advantages of our method in variable selection and prediction accuracy with respect to other existing methods. 

Our approach adopts a sieve truncation on the eigenbasis of the operator $K$ and solves the penalized least squares with functional subgradients. This strategy yields a sequence of finite-dimensional problems on nested subspaces $\mathbb{K}^{(m)}$ and leads to nonparametric convergence rates that explicitly depend on the sieve level and the complexity of the design.  However, selecting the optimal value for the dimension $m$ remains an open problem. Moreover, the representer theorem offers an alternative that enables the study of analytical properties without truncation. This approach is more closely aligned with nonparametric methods and is left for future work.

The literature in the area of functional variable selection is limited, and there is a great space for future research. For instance, we assume that the functional predictors are fully observed and are not contaminated by noise, whereas, in practice, the variables are noisy and observed on a finite grid. Considering the effect of noise and grid size remains an open issue.  More broadly, SOFIA is a linear regression method with a lasso penalty. As time-varying data acquisition advances, the need for more complex statistical tools (nonlinear methods or more complex penalties) becomes increasingly more important. This evolution introduces numerous theoretical and computational challenges that push the research agenda in FDA and, in particular, in functional regression models.

\section*{Acknowledgments}
We thank Ana Maria Kenney, Michael Morin, Hans-Georg Müller, Matthew Reimherr, Bharath Sriperumbudur and Ruodu Wang, as well as participants at the Second Workshop Mathematics for Artificial Intelligence and Machine Learning (Università Bocconi, 2024), at the 2024 Annual Meeting of the Statistical Society of Canada (Memorial University of Newfoundland), at the Canadian Mathematical Society Summer Meeting 2025 (Québec), at the 2025 International Conference on Statistics and Data Science (Vancouver) and at Université Laval (2025) for useful comments.

\section*{Funding}
H.F. was partially supported by the Faculty of Business Administration, Université Laval, and by the Interuniversity Research Center on Enterprise Network, Logistics and Transportation (CIRRELT).
M.A.C. is the chairholder of the Chair in Statistical Learning and was funded by the Natural Sciences and Engineering Research Council of Canada (NSERC, grant RGPIN-2020-05657), the Fonds de recherche du Québec Santé (FRQS, grant 2023-2024-JC-339901), the Social Sciences and Humanities Research Council (SSHRC, grant 430-2020-00358) and by the Faculty of Business Administration, Université Laval.
F.S. was funded by the Social Sciences and Humanities Research Council (SSHRC, grant 430-2020-00358).
This research was enabled by the computing capability of the Digital Research Alliance of Canada.

\section*{Data availability}
The real data are available at \verb|https://www.stevanovic.uqam.ca/DS_LCMD.html|.
The code is available at \verb|https://github.com/HedayatFathi/SOFIA|.

	\bibliographystyle{Chicago}
	\small{\bibliography{references}}

\end{document}